\newtheorem{theorem}{Theorem}
\newtheorem{lemma}[theorem]{Lemma}
\newtheorem{corollary}[theorem]{Corollary}
\theoremstyle{remark}
\theoremstyle{definition}
\newtheorem{definition}[theorem]{Definition}
\crefname{theorem}{Theorem}{Theorems}
\crefname{lemma}{Lemma}{Lemmas}
\crefname{proposition}{Proposition}{Propositions}
\crefname{corollary}{Corollary}{Corollaries}
\crefname{fact}{Fact}{Facts}
\crefname{definition}{Definition}{Definitions}
\crefname{remark}{Remark}{Remarks}
\crefname{section}{Section}{Sections}
\crefname{appendix}{Appendix}{Appendices}
\crefname{algorithm}{Algorithm}{Algorithms}
\newcommand{\partdiff}[2]{\frac{\partial {#1}}{\partial {#2}}}
\newcommand{\RR}{{\mathbb R}}
\newcommand{\cI}{\mathcal{I}}
\newcommand{\cM}{\mathcal{M}}
\newcommand{\cH}{\mathcal{H}}
\newcommand{\cA}{\mathcal{A}}
\newcommand{\cB}{\mathcal{B}}
\newcommand{\cG}{\mathcal{G}}
\newcommand{\cP}{\mathcal{P}}
\newcommand{\cZ}{\mathcal{Z}}
\newcommand{\E}{{\mathbb E}}
\newcommand{\vect}[1]{\mathbf{#1}}
\newcommand{\bx}{\vect{x}}
\newcommand{\by}{\vect{y}}
\newcommand{\bz}{\vect{z}}
\newcommand{\bd}{\vect{d}}
\newcommand{\bE}{\vect{E}}
\newcommand{\NSW}{\mbox{NSW}}
\newcommand{\OPT}{\mbox{OPT}}
\newcommand{\Var}{\mbox{Var}}
\DeclareMathOperator*{\argmax}{argmax}
\def\b1{{\bf 1}}
\title{A constant-factor approximation algorithm \\ for Nash Social Welfare with submodular valuations}
\author{Wenzheng Li\thanks{Stanford University} \and Jan Vondr\'{a}k\thanks{Stanford University}}
\begin{document} 
\maketitle \thispagestyle{empty} 

\begin{abstract}
We present a $380$-approximation algorithm for the Nash Social Welfare problem with submodular valuations. Our algorithm builds on and extends a recent constant-factor approximation for Rado valuations \cite{garg2020approximating}.  

\end{abstract}


\section{Introduction}

Nash Social Welfare is the following optimization problem.

\paragraph{Nash Social Welfare (NSW).}
Given $m$ indivisible items and $n$ agents with valuation functions $v_i:2^{[m]} \rightarrow \RR_+$,
we want to allocate items to the agents, that is find a partition of the $m$ items $(S_1,S_2,\ldots,S_n)$ that maximizes the {\em geometric average} of the valuations,
$$ \NSW(S_1,S_2,\ldots,S_n) = \left( \prod_{i=1}^{n} v_i(S_i) \right)^{1/n}.$$
Among the possible objectives considered in allocation of indivisible goods, it can be viewed as a compromise between Maximum Social Welfare (maximizing the summation $\sum_{i=1}^{n} v_i(S_i)$, which does not take fairness into account), and Max-Min Welfare (maximizing $\min_{1 \leq i \leq n} v_i(S_i)$, which focuses solely on the least satisfied agent and ignores the possible additional benefits to others). The notion of Nash Social Welfare goes back to John Nash's work \cite{Nash50} on bargaining in the 1950s. It also came up independently in the context of competitive equilibria with equal incomes \cite{Var74} and proportional fairness in networking \cite{Kel97}. An interesting feature of Nash Social Welfare is that the problem is invariant under scaling of the valuations $v_i$ by independent factors $\lambda_i$; i.e., each agent can express their preference in a ``different currency" and this does not affect the problem.

The difficulty of the problem naturally depends on what class of valuations $v_i$ we consider. Unlike the (additive) Social Welfare Maximization problem, the Nash Social Welfare problem is non-trivial even in the case where the $v_i$'s are {\em additive}, that is $v_i(S) = \sum_{j \in S} v_{ij}$ where $v_{ij}$ is agent $i$'s valuation item $j$. It is NP-hard in the case of 2 agents with identical additive valuations (by a reduction from the Subset Sum problem), and APX-hard for multiple agents \cite{Lee17}. A constant-factor approximation for the additive case was discovered in a remarkable work by Cole and Gatskelis \cite{CG18}, and subsequently via a very different algorithm by Anari et al. \cite{AGSS17}. The algorithm of \cite{CG18} is based on consideration of market equilibria and market-clearing prices. The algorithm of \cite{AGSS17} uses a convex relaxation inspired by Gurvits's work on the permanent of doubly stochastic matrices, which relies on properties of real stable polynomials.
Inspired by these exciting breakthroughs, a series of follow-up work has been developed along these two lines \cite{cole2017convex,barman2018finding,CCG18,AMGV18,GKK20}. The best approximation factor for additive valuations currently stands at $e^{1/e} \simeq 1.45$ \cite{barman2018finding}.

A particularly compelling question is whether a constant-factor approximation is possible for submodular valuations (where a $(1-\frac{1}{e})$-approximation is known for (additive) social welfare maximization \cite{FNW78,Von08}, and submodular valuations are the largest natural class for which such a result is known, assuming only value-oracle access to the valuations). Some progress has been made for Nash Social Welfare with valuations beyond additive ones: a constant factor for concave piece-wise linear separable utilities \cite{AMGV18}, and for budget-additive valuations \cite{GHM19,CCG18}; in fact the approximation factor for budget-additive valuations now matches the $e^{1/e}$ for additive valuations \cite{CCG18}. Recently, \cite{li2021estimating} designed an algorithm to estimate the optimal value within a factor of $\frac{e^3}{(e-1)^2} \simeq 6.8$ for certain subclasses of submodular valuations, such as coverage and summations of matroid rank functions, by extending the techniques of \cite{AGSS17} using stable polynomials.
And most recently, \cite{garg2020approximating} designed a constant-factor $(772)$ approximation algorithm for the class of ``Rado valuations'', which includes matroid rank functions and more generally valuations defined by a certain matching problem with a matroid constraint. \cite{garg2020approximating} presents another significantly different approach to the problem: Instead of  market/pricing-inspired techniques or techniques based on stable polynomials, this paper uses a combination of combinatorial matching techniques and a convex programming relaxation.

For general submodular valuations, the best result prior to this work was an $O(n)$-approximation which also applies to subadditive valuations \cite{GKK20,BBKS20}. However, for subadditive or even fractionally subadditive valuations we cannot expect a constant factor in the value oracle model \cite{BBKS20}, for the same reasons that this is impossible for the Social Welfare Maximization problem \cite{DNS10}. In the special case of a constant number of agents $n$ with submodular valuations, \cite{GKK20} presents a $(1-1/e-\epsilon)$-approximation for any $\epsilon>0$; this algorithm uses an extensive enumeration which makes the running time exponential in $n$.

\paragraph{Our result and techniques.}

\begin{theorem}[Main Result]
\label{thm:main}
There exists a polynomial-time constant-factor approximation algorithm for the Nash Social Welfare problem with monotone submodular valuation functions, accessible by value queries.
\end{theorem}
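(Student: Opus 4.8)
The plan is to reduce the submodular instance to a structured allocation problem solvable by a convex relaxation plus combinatorial rounding, in the spirit of \cite{garg2020approximating}, but with the Rado-valuation machinery replaced by tools tailored to general monotone submodular functions. The conceptual starting point is a \emph{surrogate valuation}. Fix any (near-)optimal allocation $(S_1^*,\dots,S_n^*)$, order $S_i^*=\{j_1,\dots,j_k\}$ arbitrarily, and set $w_{ij_\ell}=v_i(\{j_1,\dots,j_\ell\})-v_i(\{j_1,\dots,j_{\ell-1}\})$ for $j_\ell\in S_i^*$ and $w_{ij}=0$ otherwise. Decreasing marginals give $\sum_{j\in T} w_{ij}\le v_i(T)$ for all $T$, and telescoping gives equality at $T=S_i^*$; hence the additive $w_i$ lower-bound the $v_i$ pointwise with $\NSW_w(S_1^*,\dots,S_n^*)=\OPT$. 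Running any constant-factor additive-NSW algorithm \cite{CG18,barman2018finding} on the $w_i$ and evaluating its output under the true $v_i$ would already give a constant factor --- so the \emph{only} obstruction is that the surrogates depend on the unknown optimum.

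To get around this, I would introduce a convex relaxation of submodular NSW whose optimal fractional solution exposes enough of the structure of $(S_i^*)$ to build a usable surrogate on the fly. The natural candidate maximizes $\frac1n\sum_i\log v_i^{+}(\bx_i)$ over fractional allocations $\bx_1,\dots,\bx_n\in[0,1]^m$ with $\sum_i\bx_i\le\b1$, where $v_i^{+}$ is the concave closure of $v_i$; since $v_i^{+}$ is concave and positive, $\log v_i^{+}$ is concave, so this is a genuine concave program, and plugging in $\bx_i=\b1_{S_i^*}$ shows its value is at least $\log\OPT$. The concave closure is not polynomial-time computable, so the algorithm would instead use the multilinear extension $F_i$ (which satisfies $(1-1/e)\,v_i^{+}\le F_i\le v_i^{+}$ by the correlation-gap bound for monotone submodular functions) or, following \cite{garg2020approximating}, a discretized program carrying explicit matching variables for ``valuable'' items and fractional-assignment variables for the rest. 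Validity of such a restricted relaxation up to a constant is again argued via the surrogate: for each agent, either a single item already contributes a constant fraction of $v_i(S_i^*)$ --- handled by the matching variables --- or not, in which case a prefix of the marginal ordering of $S_i^*$ carries a constant fraction of the value (a discrete intermediate-value argument on the partial sums $v_i(\{j_1,\dots,j_t\})$) and is absorbed into the fractional part with only constant loss.

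The rounding then runs in two coupled phases extending \cite{garg2020approximating}. A \emph{matching phase} routes one high-value item to each agent dominated by a single item, so the feasibility of a near-optimal matching has to be argued from the fractional solution via a Hall-type / LP-integrality argument on the bipartite agent--item graph. An \emph{assignment phase} then distributes the remaining items; here monotone submodularity is the engine, because an agent whose optimal value is diffused over many items loses only a constant fraction when a constant fraction of its bundle is diverted, and that slack is what lets the two phases coexist. Throughout one compares the integral output with the surrogates $w_i$ and transfers the guarantee back to the $v_i$ via $w_i\le v_i$.

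The step I expect to be the main obstacle is the rounding analysis, and in particular the interaction of the two phases: the matching and the assignment compete for the same items, so controlling the loss in the geometric mean needs a charging argument that simultaneously handles agents dominated by one item, agents with diffuse value, and the global packing constraint $\sum_i\bx_i\le\b1$. A secondary difficulty is keeping the relaxation polynomial-time while tight up to a constant --- the non-computability of the concave closure forces one either to work with the multilinear extension (and cope with the non-concavity of $\log F_i$) or to prove directly that the discretized matching-plus-assignment program captures $\OPT$ within a constant. Extracting an explicit small constant (the claimed $380$) then amounts to optimizing the value-class thresholds, the phase split, and the rounding parameters.
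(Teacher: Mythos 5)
Your high-level skeleton --- a matching phase for ``dominant'' items plus a fractional relaxation for the rest, in the spirit of \cite{garg2020approximating} --- does match the paper. But the paper's proof hinges on three concrete technical contributions that your sketch names as obstacles without resolving, so there is a genuine gap rather than an alternative route.

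First, you correctly observe that $\log V_i$ (multilinear extension) is not concave and that the concave closure is not computable, and you say one must ``cope with'' this. The paper's answer is the \emph{iterated} continuous greedy: since $V_i$ is concave along nonnegative directions, a single continuous-greedy pass gives a multiplicative guarantee; repeatedly halving the current solution and re-running converts this into an \emph{additive} guarantee of $O(1)$ on $\frac1n\sum_i\log V_i$, and in fact the stronger bound $\frac1n\sum_i V_i(\by^*_i)/V_i(\by_i)\le e$. This is the key engine and nothing in your sketch supplies it. Second, your rounding plan (``Hall-type / LP-integrality argument'' for the matching, a vague ``charging argument'' for the assignment) is not what makes the proof go. The paper partitions each agent's fractional support into ``large'' and ``small'' items by a greedy marginal rule, rounds only the small items randomly, and controls the loss via an Efron--Stein variance bound plus Chebyshev --- exploiting that each small item's marginal is at most a $1/c$-fraction of $V_i(\by_i^{(L_i)})$. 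Third, the large items are handled by the matching-recombination lemma (alternating paths in $\tau\triangle\pi$, $\tau$-favorable vs.\ $\pi$-favorable paths), borrowed from \cite{garg2020approximating}; this is what lets one \emph{discard} the large items and still lose only a constant factor. You do not mention this mechanism.

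Your opening ``surrogate valuation'' idea is appealing but is not what the paper does, and it does not obviously close on its own: the surrogates depend on the unknown optimum, and the step ``a convex relaxation whose optimal fractional solution exposes enough of the structure of $(S_i^*)$'' is exactly the part that needs to be made concrete. The paper sidesteps surrogates entirely by working with the multilinear extension directly. In short, the proposal identifies the right difficulties but does not contain the ideas --- iterated continuous greedy, concentration-based rounding with the large/small split, and alternating-path matching recombination --- that the paper uses to overcome them.
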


The approximation factor that we obtain is $380$. We made only modest effort to optimize the constant. We believe that the best constant achievable with the techniques of this paper would still be a triple-digit number. 

Our techniques can be viewed as a natural extension of the approach in \cite{garg2020approximating}. In hindsight, the strength of the approach of \cite{garg2020approximating} is that it is rather modular and isolates the issue of providing at least some nonzero value to each agent as a separate matching problem. The question then remains how to deal with the remaining items and for this we develop some new techniques. The approach of  \cite{garg2020approximating} relies on the existence of a tractable Eisenberg-Gale relaxation with useful polyhedral properties for Rado valuations; this approach might be possibly extended to gross substitutes valuations, but probably not beyond that. The main new components that we introduce are: (i) a new non-convex relaxation of the problem (the Mixed Multilinear Relaxation), (ii) an algorithm to solve it approximately, and (iii) a randomized rounding technique using concentration of submodular functions to obtain an integer solution. We present a more detailed overview at the beginning of Section~\ref{sec:overview}.

\section{Preliminaries}

\paragraph{Nash Social Welfare (NSW).}
Given a set of $m$ indivisible items $\cG$ and a set of $n$ agents $\cA$, with valuation functions 
$v_i:2^{\cG} \rightarrow \RR_+$ for each $i \in\cA$,
we want to allocate the items to the agents, that is find a partition $(S_1,S_2,\ldots,S_n)$ of $\cG$ in order to maximize the {\em geometric average} of the valuations,
$$ \NSW(S_1,\ldots,S_n) = \left( \prod_{i=1}^{n} v_i(S_i) \right)^{1/n}.$$

\paragraph{Monotone Submodular Functions.}
Let $\cG$ be a finite ground set and $v: 2^\cG \to \RR$. 
\begin{itemize}
\item $v$ is submodular if for any $S, T \subseteq  \cG$,  
$$ v(S) + v(T) \ge v(S\cap T) + v(S\cup T).$$ 
\item $v$ is monotone if $v(S)\le v(T)$ whenever $S\subseteq T$.
\end{itemize}

\paragraph{Multilinear Extension} 
For a set function $v: 2^\cG \to \RR$, we define its multilinear extension $V: [0,1]^\cG \to\RR$ by 
$$ V(\bx) = \sum_{S\subset \cG} v(S) \prod_{i\in S} x_i \prod_{j\in \cG\backslash S}(1-x_j).$$

The following is well-known and used in prior work (e.g., \cite{CCPV11}).

\begin{lemma}
Let $V: [0,1]^\cG \to \RR$ be the multilinear extension of a set function $v: 2^\cG \to \RR$.  Then
\begin{itemize}
\item If $v$ is monotone non-decreasing, then $V$ is non-decreasing along any line with direction $\bd\ge 0$.
\item If $v$ is submodular, then $V$ is concave along any line with direction $\bd\ge 0$.
\end{itemize}
\end{lemma}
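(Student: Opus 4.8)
The plan is to pass to the probabilistic form of the multilinear extension. Let $R(\bx) \subseteq \cG$ be the random set that contains each element $\ell \in \cG$ independently with probability $x_\ell$, so that, directly from the definition, $V(\bx) = \E[v(R(\bx))]$. Each monomial in the definition of $V$ has degree at most one in every variable, so $V$ is affine in each coordinate separately; consequently $\secdiff{V}{x_i}(\bx) = 0$ for all $i$, and the remaining derivatives have clean combinatorial descriptions. Fixing a coordinate $i$ and writing $R'$ for the random subset of $\cG \setminus \{i\}$ obtained by sampling all coordinates other than $i$, conditioning on $R'$ and on the coin for $i$ gives $V(\bx) = \E[v(R')] + x_i\,\E[v(R' \cup \{i\}) - v(R')]$, hence $\partdiff{V}{x_i}(\bx) = \E[v(R' \cup \{i\}) - v(R')]$. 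Similarly, for $i \ne j$, letting $R''$ be the random subset of $\cG \setminus \{i,j\}$ sampled from the remaining coordinates, $\mixdiff{V}{x_i}{x_j}(\bx) = \E\big[v(R'' \cup \{i,j\}) - v(R'' \cup \{i\}) - v(R'' \cup \{j\}) + v(R'')\big]$.

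Now fix $\bx \in [0,1]^\cG$ and a direction $\bd \ge 0$, and let $g(t) = V(\bx + t\bd)$ on the (closed) interval of parameters $t$ for which $\bx + t\bd \in [0,1]^\cG$. For the first claim, $g'(t) = \sum_i d_i\, \partdiff{V}{x_i}(\bx + t\bd)$; if $v$ is monotone non-decreasing then every term $v(R' \cup \{i\}) - v(R') \ge 0$, so $\partdiff{V}{x_i} \ge 0$, and since $d_i \ge 0$ we get $g'(t) \ge 0$, i.e. $V$ is non-decreasing along the line.

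For the second claim, $g''(t) = \sum_{i,j} d_i d_j\, \mixdiff{V}{x_i}{x_j}(\bx + t\bd)$. The diagonal terms vanish because $\secdiff{V}{x_i} = 0$. For $i \ne j$, applying the submodularity inequality $v(S) + v(T) \ge v(S \cup T) + v(S \cap T)$ with $S = R'' \cup \{i\}$ and $T = R'' \cup \{j\}$ (so $S \cup T = R'' \cup \{i,j\}$ and $S \cap T = R''$) shows that each summand in the expectation defining $\mixdiff{V}{x_i}{x_j}$ is $\le 0$, hence $\mixdiff{V}{x_i}{x_j} \le 0$. Since $\bd \ge 0$ we have $d_i d_j \ge 0$ for all $i, j$, so $g''(t) \le 0$ and $g$ is concave.

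There is no real obstacle here — this is a standard fact — but two small points deserve attention. First, one should restrict to the sub-interval of $t$ on which $\bx + t\bd$ stays in $[0,1]^\cG$, so that the claims are meaningful. Second, the only place the hypothesis is genuinely used is in observing that, because $\bd \ge 0$, every product $d_i d_j$ is nonnegative; this is exactly why monotonicity and concavity hold along coordinate-monotone directions and can fail along arbitrary ones. Recognizing the mixed partial derivative as precisely the four-term discrete difference in the definition of submodularity is the one computational step to get right.
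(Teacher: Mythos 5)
Your proof is correct and is exactly the standard argument; the paper does not prove this lemma at all but simply cites it as well-known from prior work (e.g.\ \cite{CCPV11}), where the same derivative computation appears. The key steps — identifying $\partdiff{V}{x_i}$ as the expected marginal value $\E[v(R'\cup\{i\}) - v(R')]$, identifying $\mixdiff{V}{x_i}{x_j}$ as the expected second-order difference, noting the diagonal second partials vanish by multilinearity, and then using $d_i d_j \ge 0$ to get the sign of $g'$ and $g''$ along the ray — are all right, and your remark that restricting to the subinterval where $\bx + t\bd \in [0,1]^{\cG}$ is needed is a correct (if minor) point of hygiene.
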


We use the following shorthand notation: For a singleton set $\{j\}$, we write $v(j)$ to denote $v(\{j\})$. For a set $S$ (either containing or not containing $j$), we write $v(S+j)$ to denote $v(S \cup \{j\})$ and $v(S-j)$ to denote $v(S \setminus \{j\})$. We denote by $\b1_S$ the indicator vector of $S$, i.e. $(\b1_S)_j = 1$ if $j \in S$ and $0$ otherwise. We also write $\b1_j$ instead of $\b1_{\{j\}}$ to simplify the notation.

\section{Our algorithm and analysis}
\label{sec:overview}

\begin{algorithm}[htbp]
\caption{\bf Nash Social Welfare algorithm}
\label{alg:NSW-template}
\begin{algorithmic}[1]
  \Procedure{NSW}{$\cA,\cG,v_1,\ldots,v_n$}:
    \State Find a matching $\tau:\cA \rightarrow \cG$ maximizing $\prod_{i \in \cA} v_i(\tau(i))$
    \State $\cH := \tau(\cA)$, $\cG' := \cG \setminus \cH$, $\cA' := \{i \in \cA: v_i(\cG') > 0\}$
    \State $\by := $ {\bf IteratedContinuousGreedy}$(\cA',\cG',v_1,\ldots,v_n)$
    \State $(R_1,\ldots,R_n) := $ {\bf RandomizedRounding}$(\by)$
    \State Find a matching $\sigma:\cA \rightarrow \cH$ maximizing $\prod_{i \in \cA} v_i(R_i + \sigma(i))$
    \State Return $(R_1 + \sigma(1), R_2 + \sigma(2), \ldots, R_n + \sigma(n))$
  \EndProcedure
\end{algorithmic}
\end{algorithm}

Our algorithm at a high level is described in Algorithm~\ref{alg:NSW-template}.
We are strongly inspired by the algorithm of \cite{garg2020approximating} for Rado valuations and follow their high-level structure. We preserve some of the components of their algorithm but replace components which previously relied on special properties of Rado valuations. The new components are: a new relaxation of the Nash Social Welfare problem, and the subroutines {\bf IteratedContinuousGreedy} and {\bf RandomizedRouding}, which are described and analyzed in Sections~\ref{sec:iter-greedy} and \ref{sec:rand-rounding}, respectively.
The analysis can be summarized as follows (with a numbering of phases analogous to \cite{garg2020approximating}).

\paragraph{Phase I: Initial Matching.}
We find an optimal assignment of 1 item for each agent, i.e.~a matching $\tau:\cA \rightarrow \cG$ maximizing $\prod_{i \in \cA} v_i(\tau(i))$. This is also the starting point in \cite{garg2020approximating}. $\cH$ denotes the items allocated in this matching.

\paragraph{Phase II: Mixed Multilinear Relaxation.}
We formulate an optimization program which aims to assign the items in $\cH$ integrally and the remaining items fractionally under a certain relaxed objective. However, we do not have a concave relaxation at our disposal, such as the Eisenberg-Gale program in \cite{garg2020approximating}; no such tractable relaxation is known for general submodular functions. Instead, we propose a new relaxation involving a product of multilinear functions. 

\begin{align*}
\tag{Mixed-Multilinear}
\label{eqns:MM}
\max \quad & \prod_{i \in \cA} V_i(\bx_i) \\
s.t. \quad & \sum_{i\in\cA} x_{ij} \le 1 && \forall j\in \cG \\
& x_{ij} \ge 0 &&  \\
& x_{ij} \in \{0,1\} && \forall i \in \cA, j \in \cH
\end{align*}

Here, $V_i(\bx_i) = \sum_{S \subseteq \cG} v_i(S) \prod_{j \in S} x_{ij} \prod_{j' \in \cG \setminus S} (1-x_{ij'})$ is the multilinear extension of $v_i$. 

\medskip

Although the items in $\cH$ could be allocated arbitrarily, we will use a matching in the end. Similarly to \cite{garg2020approximating}, we prove that this does not hurt the solution significantly. In the next phase, we deal with the question of solving the fractional part of the relaxation.

\paragraph{Phase III: Iterated Continuous Greedy Algorithm.}
We ignore the items in $\cH$ for a moment and try to solve the optimization problem restricted to the item set $\cG' = \cG \setminus \cH$ and the subset of agents $\cA'$ who have positive value for these items. 

\begin{align*}
\tag{MultilinearProduct}
\label{eqns:MM}
\max \quad & \prod_{i \in \cA'} V_i(\by_i) \\
s.t. \quad & \sum_{i\in\cA'} y_{ij}\le 1 && \forall j\in \cG' \\
& y_{ij} \ge 0 &&  \\
\end{align*}

A natural idea is to apply the continuous greedy algorithm of \cite{CCPV11}. However, a direct application doesn't work since the objective function is not concave even in nonnegative directions (a product of concave functions is not necessarily concave). We can obtain an objective function concave in nonnegative directions, if we take a logarithm of the objective function: The logarithm of a non-decreasing concave function is non-decreasing concave, and we get a summation instead of a product.

\begin{align*}
\tag{LogMultilinear}
\label{eqns:LM}
\max \quad & \sum_{i \in \cA'} \log V_i(\by_i) \\
s.t. \quad & \sum_{i\in\cA'} y_{ij}\le 1 && \forall j\in \cG' \\
& y_{ij} \ge 0 &&  \\
\end{align*}

Nevertheless, the continuous greedy algorithm still doesn't work as such, because it gives a multiplicative approximation; but we require an additive approximation on the logarithmic scale. 

Our solution is an iterated version of the continuous greedy algorithm, where we run the continuous greedy algorithm, scale the solution by a factor of $1/2$, and repeat as long as there is some tangible gain. The intuition is that as long as our solution has low value, the continuous greedy process makes progress at a high rate and hence we gain more in the continuous greedy process than what we lose in the scaling step. 
The output of the iterated continuous greedy algorithm is a solution $\by$ satisfying
$$ \sum_{i \in \cA'} \frac{V_i(\by^*_i)}{V_i(\by_i)} = O(n) $$
where $\by^*$ is the optimal solution. This is a stronger guarantee than just approximating the optimum of (\ref{eqns:LM})
which will be useful in the analysis.

\paragraph{Phase IV: Randomized Rounding.}
Our next goal is to round or at least sparsify the fractional solution $\by$. Since our relaxation doesn't have polyhedral properties which were used for sparsification in \cite{garg2020approximating}, we resort to a more elementary approach:   randomized rounding. We simply allocate each item $j$ to agent $i$ with probability $y_{ij}$. 

Ideally, we would like to argue that the contribution to each agent is strongly concentrated, and thus the value of the assignment is close to the value of the fractional solution. It is known that submodular functions satisfy concentration bounds which can be useful here; the only problem is that the concentration bounds work well only for items with small contributions. 

Hence, we partition the items for each agent into ``large'' and ``small'': Large items are defined greedily by choosing the maximum marginal profit, as long as the total fractional mass of large items does not exceed some constant $c>0$. 
In the analysis, we apply randomized rounding only to the small items. Since their marginal contributions are bounded, we can apply the Efron-Stein inequality and prove that we lose only a constant factor by rounding the small items. The result is a sparsified fractional solution, where only large items are assigned fractionally and their total fractional mass is bounded for each agent.

\paragraph{Phase V: Matching Recombination.}
The last piece of the puzzle is what to do with large items. Luckily, \cite{garg2020approximating} contains a component which is useful exactly for this purpose. 
A key lemma in \cite{garg2020approximating} 
shows that for any fractional solution $\by$ and any matching $\pi:\cA \rightarrow \cH$ (imagine the optimal matching on top of $\by$), there is another matching $\rho: \cA \rightarrow \cH$ such that the value of $(\by, \rho)$ is comparable to the value of $(\by,\pi)$, and for each agent, either the item matched in $\rho$ has a significant value, or there is no item outside of $\cH$ which has a significant value. The matching is obtained by an alternating-cycle procedure applied to the matching $\pi$ and the initial matching $\tau$.

We adapt this lemma and apply it in our setting: After switching to the matching $\rho$, either the matching item $\rho(i)$ itself provides a constant fraction of agent $i$'s value, or the large items contribute at most a constant fraction of agent $i$'s value. Hence, in both cases we can simply discard the large items in the analysis and lose only a constant factor. 

We remark that in the algorithm, we apply randomized rounding to {\em all} items in $\cG \setminus \cH$, without distinguishing large and small items. This does not hurt and the algorithm is more natural this way; in any case the large items may provide some additional value. Also, we do not find the particular matching $\rho$ described here; we simply find the most profitable matching at the end. This provides a solution at least as good as the one we analyze in our proof.

\medskip

In the following, we describe each phase in detail.

\subsection{Phase I: Initial Matching}

First, we solve the Nash Social Welfare problem under the restriction that we only allocate at most one item to each agent. To achieve this, consider the complete bipartite graph between $\cA$ and $\cG$ and assign an edge weight $\omega_{ij} = \log v_i(j)$ to every edge $(i, j)\in \cA\times\cG$. We can find an optimal assignment $\tau: \cA\to\cG$ by computing the maximum-weight matching in this bipartite graph; i.e., $\tau(i)$ is the item matched to agent $i$. We define $\cH = \tau(\cA)$ to be the set of matched items.
We note that each item in the matching has positive value $v_{i}(\tau(i))> 0$ for the respective agent, otherwise there is no matching of positive value, which means that $OPT=0$.

\subsection{Phase II: Mixed Multilinear Relaxation} 

In this section, we describe our new ``Mixed Multilinear" relaxation for the Nash Social Welfare Problem, and a restricted ``Matching+Multilinear" version of it, which we show to be within a constant factor of each other. Although these relaxations are new, they are naturally analogous to the relaxations in  \cite{garg2020approximating}.

\paragraph{Mixed Multilinear Relaxation.}
For each valuation $v_i:2^\cG \to \RR_+$, we define its multilinear extension $V_i: [0,1]^{\cG} \to \RR_+$ as
$$V_i(\by_i) = \sum_{S \subseteq \cG} v_i(S) \prod_{j \in S} y_{ij} \prod_{j' \in \cG' \setminus S} (1-y_{ij'}).$$
We propose the following relaxation of the Nash Social Welfare problem.

\begin{align*}
\tag{Mixed-Multilinear}
\label{eqns:MM}
\max \quad & \prod_{i \in \cA} V_i(\bx_i) \\
s.t. \quad & \sum_{i\in\cA} x_{ij} \le 1 && \forall j\in \cG \\
& x_{ij} \ge 0 &&  \forall i \in \cA, j \in \cG \setminus \cH \\
& x_{ij} \in \{0,1\} && \forall i \in \cA, j \in \cH
\end{align*}

Note that although $\cH$ was chosen by matching one item to each agent, this might not be the case in the optimal solution. Indeed in (\ref{eqns:MM}), we allow $\cH$ to be allocated arbitrarily; but the assignment cannot be fractional. (If we allowed all items to be assigned fractionally, the relaxation would have an infinite integrality gap, for well-known reasons.) This relaxation is difficult to deal with, because it's hard to find a good assignment of $\cH$. Instead, just like in \cite{garg2020approximating}, we consider a restricted version of this relaxation, where $\cH$ is required to be allocated by a matching.

\begin{align*}
\tag{Matching+Multilinear}
\label{eqns:mixed+matching}
\max \quad & \prod_{i\in\cA} V_i(\by_i + \b1_{\sigma(i)}) \\
s.t. \quad & \sum_{i\in\cA} y_{ij}\le 1 && \forall j\in \cG\backslash\cH \\
& y_{ij} \ge 0 && \forall i \in \cA, j \in \cG \setminus \cH \\
& y_{ij} = 0 && \forall i \in \cA, j \in \cH \\
& \sigma: \cA\to\cH \mbox{ is a matching.} && 
\end{align*}

Denote by $\OPT$ the optimum value of (\ref{eqns:MM}), and by $\OPT_{\cH}$ the optimal value of the above program (\ref{eqns:mixed+matching}).  
Similar to Theorem 3.2 in \cite{garg2020approximating}, we have:

\begin{lemma}
\label{lem:1/2}
$$\OPT_{\cH}\ge \frac{1}{3^{1/3}} \OPT.$$
\end{lemma}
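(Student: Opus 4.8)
The plan is to start from an optimal solution of (\ref{eqns:MM}), discard for each agent all but the single most valuable of the special items it receives, and complete what remains into a matching on $\cH$; the claim is that this costs only a factor $3^{1/3}$ on the geometric scale, and the entire constant falls out of one elementary arithmetic estimate. Note first that if $\OPT = 0$ there is nothing to prove, so assume $\OPT > 0$.

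In detail, I would let $\bx^\star$ attain $\OPT$ in (\ref{eqns:MM}), set $\cH_i = \{ j \in \cH : x^\star_{ij} = 1 \}$ (these are pairwise disjoint subsets of $\cH$), and let $\bz_i$ be the restriction of $\bx^\star_i$ to $\cG \setminus \cH$, so that $V_i(\bx^\star_i) = V_i(\bz_i + \b1_{\cH_i})$. The key observation is that $T \mapsto V_i(\bz_i + \b1_T)$, viewed as a function of $T \subseteq \cH$, is a nonnegative monotone submodular function (it is an average over $R \sim \bz_i$ of the functions $T \mapsto v_i(R \cup T)$), hence subadditive, so
\[
V_i(\bz_i + \b1_{\cH_i}) \;\le\; \sum_{j \in \cH_i} V_i(\bz_i + \b1_j) \;\le\; |\cH_i| \cdot \max_{j \in \cH_i} V_i(\bz_i + \b1_j).
\]
Thus, letting $j_i \in \cH_i$ maximize $V_i(\bz_i + \b1_j)$, keeping only $j_i$ for agent $i$ loses at most a factor $|\cH_i|$. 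The $j_i$ (over agents with $\cH_i \neq \emptyset$) are distinct by disjointness, so this partial matching extends to a full matching $\sigma : \cA \to \cH$ since $|\cH| = |\cA| = n$; taking $\by_i := \bz_i$ then gives a feasible point of (\ref{eqns:mixed+matching}) (indeed $\sum_i y_{ij} = \sum_i x^\star_{ij} \le 1$ for $j \in \cG \setminus \cH$, and $y_{ij} = 0$ on $\cH$). Using the bound above for agents with $\cH_i \neq \emptyset$, and monotonicity together with $\bx^\star_i = \bz_i$ for the rest, I obtain
\[
\prod_{i \in \cA} V_i(\by_i + \b1_{\sigma(i)}) \;\ge\; \Bigl( \prod_{i : \cH_i \neq \emptyset} \tfrac{1}{|\cH_i|} \Bigr) \prod_{i \in \cA} V_i(\bx^\star_i) \;=\; \frac{\OPT}{\prod_{i : \cH_i \neq \emptyset} |\cH_i|}.
\]
To finish I would invoke the standard fact that a product of positive integers whose sum is at most $n$ is at most $3^{n/3}$ (parts of size $3$ are optimal), which applies since $\sum_i |\cH_i| \le |\cH| = n$; hence $\prod_i V_i(\by_i + \b1_{\sigma(i)}) \ge 3^{-n/3}\,\OPT$, i.e.\ $\OPT_{\cH} \ge 3^{-1/3}\,\OPT$ on the Nash (geometric-mean) scale.

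I expect the only genuinely substantive point to be this first reduction — that discarding all but one special item per agent is affordable — which rests on submodularity, specifically on subadditivity of a bundle's value as a function of the added special items; once that is in place the constant $3^{1/3}$ is forced by the arithmetic optimization of $\prod_i |\cH_i|$ subject to $\sum_i |\cH_i| \le n$, and no appeal to the optimality of the initial matching $\tau$ is needed here (that enters only in later phases). One mild wrinkle worth flagging is normalization: the displayed programs carry $\prod$ rather than $(\prod)^{1/n}$, so what the argument literally produces is $\OPT_{\cH} \ge 3^{-n/3}\,\OPT$ for the raw products, equivalently the stated $\OPT_{\cH}\ge \frac{1}{3^{1/3}}\OPT$ once $\OPT$ is read as the Nash value, consistent with the objective $(\prod_i v_i(S_i))^{1/n}$.
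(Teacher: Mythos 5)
Your proof is correct and follows the same strategy as the paper's: take an optimal solution of the Mixed-Multilinear relaxation, keep only the most valuable $\cH$-singleton per agent, extend to a matching, and close with the elementary estimate $\max\{k,1\}\le 3^{k/3}$ together with $\sum_i|\cH_i|\le n$. The only (cosmetic) difference is that you select $j_i$ to maximize $V_i(\bz_i+\b1_j)$ and invoke subadditivity of $T\mapsto V_i(\bz_i+\b1_T)$ directly, whereas the paper selects $\sigma(i)$ to maximize the bare singleton value $v_i(j)$ and derives the same $k_i$-factor bound via a telescoping/marginal-value argument; both choices work and yield the identical constant $3^{1/3}$.
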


\begin{proof}
Consider an optimum solution $\bx^*$ of (\ref{eqns:MM}), that is $\bx^*_i = \by^*_i + \b1_{H^*_i}$ where $\by^* \in [0,1]^{\cA \times \cG}$ is a fractional assignment of the items in $\cG'$ and $(H^*_1,\ldots,H^*_n)$ is a partition of $\cH$. We construct a feasible solution $(\by^*, \sigma)$ for (\ref{eqns:mixed+matching}), where $\sigma:\cA \rightarrow \cH$ is a matching such that for $H^*_i \neq \emptyset$, $\sigma(i)$ is the most valuable item in $H^*_i$, and the remaining items in $\cH$ are matched arbitrarily to agents such that $H^*_i = \emptyset$.

Let $k_i = |H^*_i|$ be the number of $\cH$-items allocated to agent $i$ in the optimal solution. If $k_i > 0$, $\sigma(i)$ is the most valuable of them, and by submodularity $v_i(H^*) \leq k_i v_i(\sigma(i))$. This also implies $V_i(\by^*_i + \b1_{H^*_i}) \leq \max \{k_i, 1\} V_i(\by^*_i + \b1_{\sigma(i)})$. 
Hence, we can write
\begin{eqnarray*}
    \OPT &=& \left( \prod_{i\in\cA} V_{i}(\by_i+ \b1_{H^*_i}) \right)^{1/n}  \\
    & \leq & \left( \prod_{i \in \cA} \max \{k_i, 1\} \ V_i(\by^*_i + \b1_{\sigma(i)}) \right)^{1/n} \\
    & \leq & \left( \prod_{i \in \cA} 3^{k_i/3} \ V_i(\by^*_i + \b1_{\sigma(i)}) \right)^{1/n} \\
    & = & 3^{1/3} \left( \prod_{i \in \cA} V_i(\by^*_i + \b1_{\sigma(i)}) \right)^{1/n} \\
    & \leq & 3^{1/3} \ OPT_\cH 
\end{eqnarray*}
where we used the AMGM inequality, the fact that $\max \{k,1\} \leq 3^{k/3}$ for every integer $k \geq 0$, and $\sum_{i=1}^{n} k_i = n$.
\end{proof}

We remark that the factor of $3^{1/3}$ is tight due to the following instance: $|\cH| = |\cG \setminus \cH| = n$, $n/3$ agents have the valuation $v(S) = |S \cap \cH|$, and the remaining $2n/3$ agents have the valuation $v'(S) = \min \{|S|, 1\}$. The optimal Nash Social Welfare is $3^{1/3}$, since $n/3$ agents can get value $3$ from $3$ items of $\cH$ each, and the remaining agents get value $1$ from items in $\cG \setminus \cH$. If $\cH$ is allocated as a matching, we get Nash Social Welfare $1$, since each agent receives value $1$.

\subsection{Phase III: The Iterated Continuous Greedy Algorithm}
\label{sec:iter-greedy}

In this section, we describe the details of Phase III where we aim to find a fractional solution of our (LogMultilinear) relaxation of Nash Social Welfare. We do this for a subset of items $\cG' = \cG \setminus \cH$, and a subset of agents $\cA'$ who have positive value for these items.

\begin{align*}
\tag{LogMultilinear}
\max \quad & \frac1n\sum_{i\in\cA'} \log V_i(\by_i) \\
s.t. \quad & \sum_{i\in\cA'} y_{ij}\le 1 && \forall j\in \cG' \\
&  y_{ij} = 0 && \forall i \in \cA', j \in \cH \\
& \by\ge 0 &&  
\end{align*}

We recall that 
$V_i(\by_i) = \sum_{S \subseteq \cG} v_i(S) \prod_{j \in S} y_{ij} \prod_{j' \in \cG \setminus S} (1-y_{ij'})$ is the multilinear extension of $v_i$. In this section we assume that the vector $\by_i$ always has $0$ in coordinates indexed by $j \in \cH$, so effectively we are working with vectors in $[0,1]^{\cG'}$.

We design a variant of the continuous greedy algorithm which approximates the optimal solution within an additive error of $1$.

\paragraph{The Iterated Continuous Greedy Algorithm}
\begin{enumerate}
    \item Start with a feasible solution $\by^{(0)}$, $y^{(0)}_{ij} = \frac{1}{n}$ for each $i \in \cA'$ and $j \in \cG'$.
    \item For $r = 1,2,\ldots$, given a feasible solution $\by^{(r-1)}$, initiate $\by(\frac12) = \frac12 \by^{(r-1)}$ and run the following continuous greedy algorithm:
    \begin{itemize}
        \item Let $\bz(t)$ be a feasible solution (satisfying $\bz \geq 0$ and $\sum_i z_{ij} \leq 1$ for each $j$) which maximizes the linear objective function
        $$ \sum_{i \in \cA'} \frac{\bz_i \cdot \nabla V_i(\by_i(t))}{V_i(\by_i(t))}.$$
        \item Evolve the solution $\by(t)$ according to the equation
        $$ \frac{d}{dt} \by(t) = \bz(t),$$
        for $t \in [\frac12,1]$.
    \end{itemize}
    
    \item Set $\by^{(r)} = \by(1)$, the solution obtained in this iteration. 
    
    \item If $\frac{1}{n} \sum_{i \in \cA'} \log V_i(\by^{(r)}) \geq \frac{1}{n} \sum_{i \in \cA'} \log V_i(\by^{(r-1)}) + \frac18$, let $r \leftarrow r+1$ and repeat. 
    
    \item Otherwise, return $\by^{(r)}$.
\end{enumerate}

\begin{theorem}
\label{thm:optimization}
Let $\by^*$ denote any feasible solution of the optimization program (LogMultilinear). 
Assuming that $v_i(\cG') > 0$ and $v_i$ is monotone submodular for each $i \in \cA'$, the Iterated Continuous Greedy algorithm terminates in $O(\log n)$ iterations and returns a feasible solution $\by$ for (LogMultilinear) such that
$$ \frac{1}{n} \sum_{i \in \cA'} \frac{V_i(\by^*_i)}{V_i(\by_i)} \leq e.$$
\end{theorem}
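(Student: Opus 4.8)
The plan is to track the potential $\Phi_r := \sum_{i \in \cA'} \log V_i(\by^{(r)}_i)$ across iterations $r = 0, 1, 2, \ldots$ (with $\log$ the natural logarithm), and to establish one per-iteration inequality from which everything follows: for every executed iteration $r \geq 1$,
\[
\sum_{i \in \cA'} \frac{V_i(\by^*_i)}{V_i(\by^{(r)}_i)} \ \leq\ 2\bigl(\Phi_r - \Phi_{r-1}\bigr) + |\cA'|\,(2\log 2 + 1).
\]
Granting this, when the algorithm returns $\by = \by^{(r)}$ the termination test has failed, i.e.\ $\Phi_r - \Phi_{r-1} < n/8$; substituting and using $|\cA'| \le n$ gives $\frac1n\sum_{i \in \cA'} V_i(\by^*_i)/V_i(\by_i) < \tfrac14 + 2\log 2 + 1 < e$, which is the claimed guarantee. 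For the iteration count, $\Phi_r$ is non-decreasing over executed iterations and each non-terminating one raises it by at least $n/8$; I will separately show $\Phi_0 \geq \sum_{i \in \cA'} \log(v_i(\cG')/n)$ and $\Phi_r \leq \sum_{i \in \cA'} \log v_i(\cG')$, so $\Phi$ can increase by at most $|\cA'|\log n \leq n\log n$ overall, bounding the number of iterations by $1 + 8\log n = O(\log n)$.

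To prove the per-iteration inequality I would analyze iteration $r$, which drives the continuous greedy process from $\by(\tfrac12) = \tfrac12 \by^{(r-1)}$ to $\by(1) = \by^{(r)}$, and separate two effects. \emph{Scaling loss:} since $V_i$ is concave along the segment from $\bzero$ to $\by^{(r-1)}_i$ and $V_i(\bzero) = v_i(\emptyset) \geq 0$, we get $V_i(\tfrac12\by^{(r-1)}_i) \geq \tfrac12 V_i(\by^{(r-1)}_i)$, hence $\sum_{i \in \cA'} \log V_i(\by_i(\tfrac12)) \geq \Phi_{r-1} - |\cA'|\log 2$. \emph{Continuous greedy gain:} writing $G(t) = \sum_{i \in \cA'} \log V_i(\by_i(t))$, the chain rule gives $G'(t) = \sum_{i \in \cA'} \bz_i(t) \cdot \nabla V_i(\by_i(t))/V_i(\by_i(t))$. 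The direction $\bz'$ with $\bz'_i := (\by^*_i \vee \by_i(t)) - \by_i(t) = (\by^*_i - \by_i(t))^+$ is a feasible competitor (each column of $\bz'$ sums to at most $\sum_i \by^*_{ij} \le 1$, and $\bz'$ vanishes on $\cH$), so using optimality of $\bz(t)$, concavity of $V_i$ along the nonnegative direction $\bz'_i$, and monotonicity of $V_i$,
\[
G'(t) \ \geq\ \sum_{i \in \cA'} \frac{\bz'_i \cdot \nabla V_i(\by_i(t))}{V_i(\by_i(t))} \ \geq\ \sum_{i \in \cA'} \frac{V_i(\by^*_i \vee \by_i(t)) - V_i(\by_i(t))}{V_i(\by_i(t))} \ \geq\ \sum_{i \in \cA'} \frac{V_i(\by^*_i)}{V_i(\by_i(t))} - |\cA'|.
\]
Since $\bz_i(t) \geq 0$ and $\nabla V_i \geq 0$, each $t \mapsto V_i(\by_i(t))$ is non-decreasing, so this lower bound is at least $\sum_{i \in \cA'} V_i(\by^*_i)/V_i(\by^{(r)}_i) - |\cA'|$ for all $t \in [\tfrac12, 1]$; integrating over that length-$\tfrac12$ interval yields $\Phi_r - \sum_{i\in\cA'}\log V_i(\by_i(\tfrac12)) \geq \tfrac12\bigl(\sum_{i \in \cA'} V_i(\by^*_i)/V_i(\by^{(r)}_i) - |\cA'|\bigr)$. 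Adding the scaling-loss bound and rearranging produces the per-iteration inequality.

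Finally I would verify the a priori $\Phi$ bounds and well-definedness. As $\by^{(r)}_i \leq \b1_{\cG'}$ and $V_i$ is monotone, $V_i(\by^{(r)}_i) \leq V_i(\b1_{\cG'}) = v_i(\cG')$; and $\by^{(0)}_i = \tfrac1n \b1_{\cG'}$, so concavity from $\bzero$ gives $V_i(\by^{(0)}_i) \geq \tfrac1n V_i(\b1_{\cG'}) = \tfrac1n v_i(\cG')$. All these are strictly positive since $v_i(\cG') > 0$ for $i \in \cA'$, which keeps every $V_i(\by^{(r)}_i)$ and every $V_i(\by_i(t))$ positive, so the logarithms and divisions above are legitimate. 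The analysis is written in the idealized continuous-time model; the real algorithm discretizes the dynamics with a small step and estimates each $\nabla V_i$ by sampling, and these $o(1)$ errors can be absorbed without affecting the bounds.

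The step I expect to be the main obstacle is making the constants fit together. The continuous-greedy estimate has to be expressed in terms of $\sum_i V_i(\by^*_i)/V_i(\by^{(r)}_i)$ at the \emph{endpoint} $t = 1$ — which is exactly why the monotonicity of $t \mapsto V_i(\by_i(t))$ and the length-$\tfrac12$ integration window both matter — because that endpoint quantity is precisely what the termination test $\Phi_r - \Phi_{r-1} < n/8$ pins down. The halving step is what turns continuous greedy's multiplicative guarantee into the additive ``$+\Omega(1)$ on the log scale'' guarantee that the outer loop consumes, at the cost of the $|\cA'|\log 2$ term, and the threshold $\tfrac18$ must then be small enough that $2 \cdot \tfrac18 + 2\log 2 + 1 < e$. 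Keeping these pieces synchronized — and rigorously discharging the discretization and sampling errors — is where the actual effort goes.
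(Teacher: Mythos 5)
Your proof is correct and follows essentially the same approach as the paper: halve, run continuous greedy with the competitor direction $\by^*$ (or its rectified variant), use concavity plus monotonicity to lower-bound the logarithmic rate, and use monotonicity of $t\mapsto V_i(\by_i(t))$ to charge everything to the endpoint $\by^{(r)}$. The only difference is presentational: the paper argues by contradiction (``if the termination condition at the endpoint were violated, the gain in this iteration would have been $\geq n/8$''), while you state the per-iteration inequality explicitly and then plug in the termination threshold; the underlying inequalities, the $\log 2$ scaling loss, the length-$\tfrac12$ integration window, and the constant accounting are all the same.
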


We note that by concavity of the logarithm, the conclusion also implies $\frac{1}{n} \sum_{i \in \cA'} \log  \frac{V_i(\by^*_i)}{V_i(\by_i)} \leq 1$, i.e.~our solution approximates the optimum of (LogMultilinear) within an additive error of $1$. The statement in the lemma is stronger and more convenient, though, which we will use later in several places. 

\begin{proof}
As a starting point, we have $y^{(0)}_{ij} = \frac{1}{n}$. By concavity of $V_i$ in positive directions, we have the simple bound $V_i(\by^{(0)}_i) \geq \frac{1}{n} V_i(\b1)$. Hence, $\frac{1}{n} \sum_{i \in \cA'} \log V_i(\by^{(0)}_i) \geq \frac{1}{n} \sum_{i \in \cA'} \log V_i(\b1) - \log n$.
Now we apply the continuous greedy algorithm as above, and we iterate as long as after each iteration we have $\frac{1}{n} \sum_{i \in \cA'} \log V_i(\by^{(r)}) \geq \frac{1}{n} \sum_{i \in \cA'} \log V_i(\by^{(r-1)}) + \frac18$. Since for any feasible solution, $\sum_{i \in \cA'} \log V_i(\by^{(r)}) \leq \sum_{i \in \cA'} \log V_i(\b1)$, this means that we cannot iterate more than $O(\log n)$ times. It remains to prove that the solution satisfies the claimed inequality.

To prove this, assume at any time $t$ that
\begin{equation}
\tag{*}
 \sum_{i \in \cA'} \frac{V_i(\by^*_i)}{V_i(\by_i)} > e n. 
\end{equation}
A possible direction for the continuous greedy algorithm to pursue is always $\bz = \by^*$. For this direction, 
we obtain
$$ \sum_{i \in \cA'} \frac{\by^*_i \cdot \nabla V_i(\by_i(t))}{V_i(\by_i(t))} 
\geq \sum_{i \in \cA'} \frac{V_i(\by^*_i) - V_i(\by_i(t))}{V_i(\by_i(t))}
= \sum_{i \in \cA'} \left( \frac{V_i(\by^*_i)}{V_i(\by_i(t))} - 1 \right) > (e-1) n $$
using the monotonicity and concavity of $V_i$ in nonnegative directions in the first inequality,
and our assumption (*) in the second inequality.
Since the continuous greedy algorithm chooses a direction $\bz_i(t)$ by optimizing the expression
$ \sum_{i \in \cA'} \frac{\bz_i \cdot \nabla V_i(\by_i(t))}{V_i(\by_i(t))}$,
we obtain the same bound for the greedy direction $\bz_i(t)$, and finally by the chain rule we have
$$ \frac{d}{dt} \sum_{i \in \cA'} \log V_i(\by_i(t)) = \sum_{i \in \cA'} \frac{1}{V_i(\by_i(t))} 
\nabla V_i(\by_i(t)) \cdot \frac{d\by_i}{dt} = \sum_{i \in \cA'} \frac{\bz_i(t) \cdot \nabla V_i(\by_i(t))}{V_i(\by_i(t))} > (e-1)n.$$
Hence the rate of increase in $\sum_{i \in \cA'} \log V_i(\by_i(t))$ is at least $(e-1) n$ as long as (*) is satisfied.

Each iteration starts by scaling the previous solution by a factor of $\frac12$ and then
running continuous greedy for $t$ between $\frac12$ and $1$. Again by concavity, we have 
$V_i(\frac12 \by^{(r-1)}) \geq \frac12 V_i(\by^{(r-1)})$. 
By integration over the course of the continuous greedy process, we obtain
$$ \sum_{i \in \cA'} \log V_i(\by^{(r)}_i) = \sum_{i \in \cA'} \log V_i\left(\frac12 \by^{(r-1)}_i\right)
+ \int_{1/2}^{1} \frac{d}{dt} \sum_{i \in \cA'} \log V_i(\by_i(t)) dt $$
$$ \geq \sum_{i \in \cA'} \log \left( \frac12 V_i(\by^{(r-1)}_i) \right) + \int_{1/2}^1 (e-1) n \ dt $$
$$ = \sum_{i \in \cA'} \log V_i(\by^{(r-1)}_i) + \left(\frac{e-1}{2} - \log 2 \right) n.$$
We note that all logarithms here are natural and $\frac{e-1}{2} - \log 2 > \frac18$. 
Hence we gain at least $\frac18 n$ in each iteration as long as (*) is satisfied, and we terminate otherwise.
\end{proof}

\paragraph{Discretization.}
As in the original continuous greedy algorithm \cite{CCPV11}, we need to discretize the continuous process to obtain an actual polynomial-time algorithm. This can be done using standard methods. 

First, for any given $\by_i(t)$, we can estimate by random sampling 
$$\partdiff{V_i}{y_j}\Big|_{\by_i(t)} = \E[v_i(R_i(t) + j) - v_i(R_i(t) - j)]$$
where $R_i(t)$ is a random set containing each item $j$ independently with probability $y_{ij}(t)$. Since $v_i(R_i(t)+j) - v_i(R_i(t)-j) \in [0, v_i(\{j\})]$, using $poly(m,n)$ samples we can obtain estimates $\omega_{ij}$ of $\partdiff{V_i}{y_j}$ within an error of $\frac{v_i(\{j\})}{poly(m,n)}$ with high probability.

Then we find a direction $\bz(t)$ by solving the linear programming problem 
$$\max \left\{ \sum_{i \in \cA'} \frac{1}{V_i(\by_i(t))} \sum_{j \in \cG'} \omega_{ij} z_{ij}: z_{ij} \geq 0, \sum_i z_{ij} \leq 1 \ \forall j \right\} $$
(using $\omega_{ij}$ in place of $\partdiff{V_i}{y_j}$). If the estimates $\omega_{ij}$ are correct up to an error of $\frac{v_i(\{j\})}{poly(m,n)}$, the optimum is correct up to a relative error of $\frac{1}{poly(m,n)}$. Note that $V_i(\by_i(t)) \geq \frac{1}{poly(m,n)} \sum_{i \in \cG'} v_i(\{j\})$ since this is true for the initial solution $\by^{(0)}$ and the value can only decrease $O(\log n)$ times by a factor of $2$; apart from that it increases.

Then we make a step of size $\delta = \frac{1}{poly(m,n)}$, where we set $\by(t+\delta) = \by(t) + \delta \cdot \bz(t)$. 
The guarantee we claim here is that
$$ \sum_{i \in \cA'} \log V_i(\by_i(t+\delta)) \geq \sum_{i \in \cA'} \log V_i(\by_i(t)) + \delta \, \left(1 - \frac{1}{poly(m,n)} \right) \sum_{i \in \cA'} \frac{\by^*_i \cdot \nabla V_i(\by_i(t))}{V_i(\by_i(t))}.  $$
This is true because we find the optimum of the linear programming problem within a $\frac{1}{poly(m,n)}$ relative error, and also the values $V_i(\by_i)$ and the partial derivatives $\partdiff{V_i}{y_j}$ can change only by a factor of $1 \pm \frac{1}{poly(m,n)}$ between $\by_i(t)$ and $\by_i(t+\delta)$, as long as $t \leq 0.99$ (since $V_i(\by_i)$ and $\partdiff{V_i}{y_j}$ are nonnegative and linear in each coordinate separately). Hence, we can mimic the continuous analysis for $t \in [0.5,0.99]$ within a $\frac{1}{poly(m,n)}$ relative error at every step, and we lose a factor of $49/50$ by ignoring the improvement between $[0.99, 1]$. These errors are easily absorbed for example in the gap between $\frac{e-1}{2} - \log 2$ and $\frac18$ which we ignore above. So the theorem still holds for the discretized algorithm, with high probability.

\subsection{Phase IV: Randomized Rounding}
\label{sec:rand-rounding}

In this section, our goal is to round the fractional solution $\by$ from Section~\ref{sec:iter-greedy}.
In the actual algorithm, we use the following simple randomized rounding procedure.

\paragraph{RandomizedRounding($\by$)}
\begin{enumerate}
    \item For each item $j \in \cG'$ independently, select $Z_j \in \{0,1,\ldots,n\}$ where $Z_j = i$ with probability $y_{ij}$, or $Z_j = 0$ with probability $1 - \sum_{i \in \cA'} y_{ij}$.
    
    \item Define $R_i = \{j \in \cG \setminus \cH: Z_j = i \}$.

    \item Return $(R_1,\ldots,R_n)$.
\end{enumerate}

However, in the analysis we will proceed more carefully, separating the contributions of ``large'' and ``small'' items. We first define what we mean by ``large'' and ``small''. 
For any $\by \in [0,1]^{\cG}$ and $S\subseteq \cG$, define vector $\by^{(S)}$ to be the vector obtained by setting all the coordinates not in $S$ to $0$.
\begin{align*}
y^{(S)}_i = \left\{ \begin{array}{ll}
y_{i} & i\in S \\
0  & i\notin S.\\
\end{array}\right.
\end{align*}
For each agent $i \in \cA'$, we define the set $L_i$ of ``large items'' as follows, for a given constant $c>0$.
Let us assume in the following that $\sum_{j \in \cG'} y_{ij} \geq c$ for every agent $i$.
This is without loss of generality, since we can always extend the instance with dummy items of value $0$, which can be allocated fractionally to any agent and it doesn't change the outcome of our algorithm in any way.

\paragraph{FindLargeSet($i$, $\by$)}
\begin{enumerate}
    \item Start with an empty set at time $0$, $L^{(0)}_i = \emptyset$. 
    
    \item At time $t \geq 1$, add the item with the largest marginal value to $L^{(t-1)}_i$. More specifically, let $L^{(t)}_i = L^{(t-1)}_i \cup \{j_t\}$ where 
    \begin{align*}
    j_t &= \argmax_{j \in \cG' \backslash L^{(t-1)}_i} \left(V_i(\by_i^{(L^{(t-1)}_i)}+\vect{1}_{j}) - V_i(\by_i^{(L^{(t-1)}_i)}\right) 
    \end{align*}
    
    \item As long as $\sum_{t'=1}^t y_{i j_{t'}} < c$ and $\cG' \setminus L_i^{(t)} \neq \emptyset$,
    let $t \leftarrow t+1$ and repeat step 2.
    
    \item Return $L_i := L^{(t)}_i$.
\end{enumerate}




\noindent 
We have two simple corollaries for the set $L_i$.
\begin{itemize}
\item For any agent $i\in\cA'$, $$c \leq \sum_{j\in L_i}y_{ij} < c+1,$$
\item For any $i\in\cA'$, $j \in \cG' \backslash L_i$, 
$$ V_i(\by_i^{(L_i)}+\vect{1}_{j}) - V_i(\by_i^{(L_i)}) \le \frac{1}{c} V_i(\by_i^{(L_i)}).$$
\end{itemize}

The first property follows from the stopping rule (including our assumption that each agent gets $\sum_{j \in \cG'} y_{ij} \geq c$ in the fractional solution). As for the second one, if the marginal value is 0 for any $j \in \cG' \backslash L^{(t-1)}_i$, it is trivially true. Otherwise, consider any item $j$ in $L_i$ that we did not include in the procedure; (by submodularity) in every step we included an item $j_t$ of marginal value $V_i(\by_i^{(L^{(t-1)}_i)}+\vect{1}_{j_t}) - V_i(\by_i^{(L^{(t-1)}_i)}) \geq V_i(\by_i^{(L_i)}+\vect{1}_{j}) - V_i(\by_i^{(L_i)})$ and by multilinearity the total contribution of the included items is 
$$V_i(\by_i^{(L_i)}) =  \sum_{t=1}^{|L_i|} y_{i j_t}  (V_i(\by_i^{(L^{(t-1)}_i)}+\vect{1}_{j_t}) - V_i(\by_i^{(L^{(t-1)}_i)}))\geq c (V_i(\by_i^{(L_i)}+\vect{1}_{j}) - V_i(\by_i^{(L_i)})).$$

Now we can describe our modified rounding procedure. We note that this procedure is used only in the analysis.

\paragraph{RestrictedRandomizedRounding($\by$)}
\begin{enumerate}
    \item Compute the set $L_i$ (specified above) for each agent $i\in\cA'$.
    
    \item For each item $j \in \cG'$, assign $j$ to a random player according to $y_{ij}$: \\
    Let $Z_j = i$ with probability $y_{ij}$, or $Z_j = 0$ with probability $1 - \sum_{i \in \cA'} y_{ij}$. \\
    For each $i \in \cA'$, let $S_i = \{j\in\cG'\backslash L_i: Z_j = i\}$ and $\by^{(s)}_i =  \by_i^{(L_i)}+\vect{1}_{S_i}$.
    
    \item Return $\by^{(s)}$.
\end{enumerate}

Note that only ``small items'' are included in the sets $S_1,\ldots,S_n$, and large items are still assigned fractionally in $\by^{(s)}$. Thus the solution $\by^{(s)}$ can be viewed as ``sparsified'' rather than rounded. We note that the notion of sparsity here is in terms of the summation of fractional variables ($\sum_{i \in \cA'} \sum_{j \in L_i} y^{(s)}_{ij} < (c+1) n$) rather than the size of the support of $\by^{(s)}$.

The notion of large/small is agent-specific, so $\by^{(s)}$ might not even be a feasible solution; an item could be allocated fully as a small item and still fractionally as a large item for other agents. We will show at the end that large items can be in fact discarded. However, for now we analyze the value of $\by^{(s)}$.


\begin{lemma}
\label{lem:small-items} 
Suppose that 
$$ \frac{1}{n} \sum_{i \in \cA'} \frac{V_i(\by^*_i)}{V_i(\by_i)} \leq \alpha. $$
Then with probability $\Omega(\epsilon)$, the solution $\by^{(s)}$ obtained by {\bf RestrictedRandomizedRounding}($\by$) with parameter $c>0$ satisfies
$$ \frac{1}{n} \sum_{i \in \cA'} \frac{V_i(\by^*_i)}{V_i(\by^{(s)}_i)} \leq (1+\epsilon) (2 + 4/c)  \alpha. $$
\end{lemma}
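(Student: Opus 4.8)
The goal is to show that randomized rounding of the "small" items costs only a constant factor, in the sense that $\frac{1}{n}\sum_i V_i(\by^*_i)/V_i(\by^{(s)}_i)$ stays $O(\alpha)$ with constant probability. The natural strategy is: (1) for each fixed agent $i$, lower-bound $\E[V_i(\by^{(s)}_i)]$ in terms of $V_i(\by_i)$, then (2) show $V_i(\by^{(s)}_i)$ concentrates around its mean well enough that $1/V_i(\by^{(s)}_i)$ is not much larger than $1/\E[V_i(\by^{(s)}_i)]$, and finally (3) combine across agents using linearity of expectation on the sum $\sum_i V_i(\by^*_i)/V_i(\by^{(s)}_i)$ and Markov's inequality to get the $\Omega(\epsilon)$ probability. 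The factor $(1+\epsilon)$ will come from the Markov step, and the $(2+4/c)$ from combining the expectation bound (which should lose roughly a factor $2$) with the large-items-still-present structure (which should lose roughly $1+2/c$, since large items carry fractional mass in $[c,c+1)$ and each has marginal $\le \frac1c V_i(\by^{(L_i)}_i)$).

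\textbf{Key steps in order.}
\begin{enumerate}
\item \emph{Expectation lower bound for a single agent.} Fix $i$. The set $S_i$ contains each small item $j\in\cG'\setminus L_i$ independently with probability $y_{ij}$, and $\by^{(s)}_i=\by^{(L_i)}_i+\b1_{S_i}$. Write $W_i(T)=V_i(\by^{(L_i)}_i+\b1_T)$ for $T\subseteq\cG'\setminus L_i$; this is itself the multilinear-extension evaluation of a monotone submodular function (in the small coordinates, with the large coordinates frozen at their fractional values). I would argue $\E[W_i(S_i)]=V_i(\by_i)$ is \emph{not} quite right — rather $\E[W_i(S_i)]$ equals $V_i$ evaluated with large coordinates at $\by$ and small coordinates at $\by$, i.e. exactly $V_i(\by_i)$. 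So $\E[V_i(\by^{(s)}_i)] = V_i(\by_i)$. Good — that removes a factor. The issue is concentration from below.
\item \emph{Concentration / anti-concentration from below via Efron--Stein.} Using the bound on marginal values of non-large items ($V_i(\by^{(L_i)}_i+\b1_j)-V_i(\by^{(L_i)}_i)\le \frac1c V_i(\by^{(L_i)}_i)$, and more usefully relative to $W_i(S_i)$ itself by submodularity), apply the Efron--Stein inequality to bound $\Var[V_i(\by^{(s)}_i)]$ by something like $\frac1c$ times $\E[\cdot]$ squared, or bound $\E[1/V_i(\by^{(s)}_i)]$ directly. The cleanest route: show $\Pr[V_i(\by^{(s)}_i) < \frac12 V_i(\by_i)]$ is small, or — since we only need a bound in expectation on the ratio — show $\E\!\left[\frac{V_i(\by_i)}{V_i(\by^{(s)}_i)}\right] \le 2+O(1/c)$ using a convexity/self-bounding argument on the submodular function restricted to small items. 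This is where the constant $2+4/c$ is generated.
\item \emph{Assembling the sum.} With $\E\!\big[V_i(\by^*_i)/V_i(\by^{(s)}_i)\big] \le (2+4/c)\,V_i(\by^*_i)/V_i(\by_i)$ for each $i$ (note $V_i(\by^*_i)$ is a constant, not random), sum over $i\in\cA'$ and use the hypothesis $\frac1n\sum_i V_i(\by^*_i)/V_i(\by_i)\le\alpha$ to get $\E\!\big[\frac1n\sum_i V_i(\by^*_i)/V_i(\by^{(s)}_i)\big]\le(2+4/c)\alpha$. Then Markov's inequality gives $\Pr\!\big[\frac1n\sum_i V_i(\by^*_i)/V_i(\by^{(s)}_i) \le (1+\epsilon)(2+4/c)\alpha\big] \ge \frac{\epsilon}{1+\epsilon}=\Omega(\epsilon)$.
\end{enumerate}

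\textbf{Main obstacle.} The delicate point is Step 2: bounding $\E[1/V_i(\by^{(s)}_i)]$ (or equivalently controlling the lower tail of $V_i(\by^{(s)}_i)$) for a submodular function evaluated at a random set. Submodular concentration inequalities (Efron--Stein, Talagrand-type, the bounded-differences arguments used for self-bounding functions) control the \emph{upper} tail and fluctuations cleanly, but a reciprocal moment needs a genuine lower-tail statement, and submodular functions can have heavy lower tails in general. The saving grace is the "large items" construction: after removing $L_i$, every remaining item has marginal value at most $\frac1c$ of the current function value, so $W_i$ restricted to the small coordinates behaves like a function with small individual influences relative to its own scale. I expect the proof to exploit exactly this — writing $V_i(\by^{(s)}_i)\ge V_i(\by^{(L_i)}_i)$ deterministically (monotonicity) and then showing the \emph{increment} $W_i(S_i)-W_i(\emptyset)$ concentrates, or alternatively bounding $\E[W_i(S_i)^{-1}]$ by splitting on whether $W_i(S_i)\ge\frac12\E[W_i(S_i)]$ and using an Efron--Stein variance bound of the form $\Var[W_i(S_i)]\le \frac1c\big(\E[W_i(S_i)]\big)^2$ on the good event, while on the bad event crudely using $W_i(S_i)\ge W_i(\emptyset)=V_i(\by^{(L_i)}_i)\ge \frac{c}{c+1}V_i(\by_i)$ or similar. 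Pinning down the exact constant $2+4/c$ from this case analysis, and verifying that the "large items still present" part of $\by^{(s)}$ (which only helps, by monotonicity) is correctly accounted for, is the main technical work; everything else is routine linearity-of-expectation and Markov.
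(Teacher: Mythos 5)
Your high-level plan (Steps 1 and 3, and the general shape of Step 2) matches the paper's proof: define $u_i(S) = V_i(\by_i^{(L_i)} + \b1_S)$, observe $\E[u_i(S_i)] = V_i(\by_i)$ by multilinearity, control $\E\bigl[V_i(\by_i)/u_i(S_i)\bigr]$ per agent via an Efron--Stein variance bound and a Chebyshev case-split, then combine with linearity of expectation and Markov. However, the specific estimates you propose in Step 2 contain a genuine gap. The bad-event backup bound $u_i(\emptyset) = V_i(\by_i^{(L_i)}) \ge \frac{c}{c+1}V_i(\by_i)$ is \emph{false} in general: take $v_i(S) = \min\{|S|, K\}$ with all $y_{ij} = \epsilon$ and $m\epsilon \approx 2K$; the greedy procedure picks roughly $c/\epsilon$ items so $u_i(\emptyset) \approx c$, while $V_i(\by_i) \approx K$, and the ratio $u_i(\emptyset)/V_i(\by_i) \approx c/K$ can be made arbitrarily small. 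The $\frac1c$-marginal property controls each small item's contribution \emph{relative to $u_i(\emptyset)$}, not relative to $V_i(\by_i)$, and those are unboundedly far apart.

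The paper avoids needing any such lower bound, and this is the key move you are missing. Instead of the looser $\Var[u_i(S_i)] \le \frac1c V_i(\by_i)^2$, the paper uses Efron--Stein together with the marginal bound $u_i(S+j)-u_i(S) \le u_i(\emptyset)/c$ and the self-bounding/submodularity inequality $\sum_{j\in S}(u_i(S)-u_i(S-j)) \le u_i(S)$ to get the \emph{sharper} bound $\Var[u_i(S_i)] \le \frac{u_i(\emptyset)}{c}\,V_i(\by_i)$. Chebyshev then gives $\Pr\bigl[u_i(S_i) \le \tfrac12 V_i(\by_i)\bigr] \le \frac{4u_i(\emptyset)}{c\,V_i(\by_i)}$. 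On the bad event one bounds $V_i(\by_i)/u_i(S_i) \le V_i(\by_i)/u_i(\emptyset)$ by monotonicity, so the bad-event contribution to $\E[V_i(\by_i)/u_i(S_i)]$ is at most $\frac{V_i(\by_i)}{u_i(\emptyset)}\cdot\frac{4u_i(\emptyset)}{c\,V_i(\by_i)} = \frac4c$; the potentially tiny $u_i(\emptyset)$ cancels exactly, giving $2+\frac4c$ with no lower bound on $u_i(\emptyset)$ needed. Your version, even if the false bound were true, would yield $2 + \frac{4(c+1)}{c^2}$ rather than $2+\frac4c$, so both the estimate and the constant are off; the cancellation via the $u_i(\emptyset)$-dependent variance bound is the essential ingredient.
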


\begin{proof}
Using the notation from {\bf RestrictedRandomizedRounding}($\by$),
for every $i\in\cA'$, we define a monotone submodular function $u_i:2^{\cG'\backslash L_i} \to \RR$, where $u_i(S) = V_i(\by_i^{(L_i)}+\vect{1}_{S})$. Recall that $\by^{(s)}_i = \by_i^{(L_i)} + \b1_{S_i}$; that is, $V_i(\by_i^{(s)}) = u_i(S_i)$. The sets $S_1,\ldots,S_n$ are determined by the random variables $(Z_j: j \in \cG')$. 
Our goal is to upper-bound 
$$V(\cZ) = V(Z_j: j \in \cG')
 =  \frac1n \sum_{i\in\cA'}\frac{V_i(\by_i^*)}{V_i(\by^{(s)}_i)}
  = \frac1n \sum_{i\in\cA'} \frac{V_i(\by_i^*)}{u_i(S_i)}.$$
By the definition of $L_i$ and by submodularity, we know that for any $i\in\cA'$, $j\in\cG'\backslash L_i$ and $S\subseteq\cG'\backslash L_i$, 
$$ 0 \leq u_i(S\cup\{j\}) - u_i(S)\le \frac{V_i(\by_i^{(L_i)})}{c} = \frac{u_i(\emptyset)}{c}.$$ 
Since $u_i(S_i)$ is a function of the independent random variables $(Z_j: j \in \cG')$, by the Efron-Stein inequality, we have 
\begin{align*}
    \Var[u_i(S_i)] 
    &\le  \E\left[\sum_{j \in \cG'} \left(u_i(S_i) - \min_{Z_j} u_i(S_i)\right)^2\right] \\
    & =  \E\left[\sum_{j\in S_i}\left(u_i(S_i) - u_i(S_i\backslash\{j\})\right)^2\right] \\
    &\le  \frac{u_i(\emptyset)}{c}\cdot\E\left[\sum_{j\in S_i}\left(u_i(S_i) - u_i(S_i\backslash\{j\})\right)\right] \\
    &\le  \frac{u_i(\emptyset)}{c}\cdot\E[u_i(S_i)] = \frac{u_i(\emptyset)}{c} \cdot V_i(\by_i)
\end{align*} 
where we used the submodularity of $u_i$ in the last inequality.
By Chebyshev's inequality, we have
$$\Pr\left[u_i(S_i)\le \frac{V_i(\by_i)}{2}\right] \le \frac{\Var[u_i(S_i)]}{(V_i(\by_i)/2)^2}\le \frac{4u_i(\emptyset)}{c \, V_i(\by_i)}.$$
Therefore,
$$\E\left[\frac{V_i(\by_i)}{u_i(S_i)}\right] \le \frac{V_i(\by_i)}{V_i(\by_i)/2} + \frac{V_i(\by_i)}{u_i(\emptyset)}\cdot\Pr\left[u_i(S_i)\le \frac{V_i(\by_i)}{2}\right]\le 2 + \frac{4}{c}. $$ 
Combining this with $\frac1n\sum_{i\in\cA'}\frac{V_i(\by_i^*)}{V_i(\by_i)} \leq \alpha$, we can write
$$ \E[V(\cZ)] = \frac1n\sum_{i\in\cA'}\frac{V_i(\by_i^*)}{V_i(\by_i)} \E\left[ \frac{V_i(\by_i)}{u_i(S_i)} \right] \leq  (2+4/c) \alpha.$$
By Markov's inequality, we conclude that with probability $\Omega(\epsilon)$, $V(\cZ)\le (1+\epsilon) (2+4/c) \alpha$.
\end{proof}

\subsection{Phase V: Matching recombination}
\label{sec:rematching}

Now we have a fractional solution $\by^{(s)}$ with good properties; however, we ignored the fact that $\cH$ should be also allocated. Our goal in this section is to prove that there exists a matching which works well with our fractional solution $\by^{(s)}$, and at the same time it has additional properties which allow us round the large items (or in fact discard them!) and still obtain a good value of Nash Social Welfare. 

We proceed very much as in \cite{garg2020approximating}. First, we prove that there exists a matching $\sigma$ which obtains a good value together with $\by^{(s)}$. 

\paragraph{Matching extension.}
Here we show that there exists a matching $\sigma:\cA \rightarrow \cH$ which complements well the fractional solution $\by^{(s)}$.

\begin{lemma}
\label{lem:rematching}
Let $\bx^*$ be the optimal solution of (MixedMultilinear), i.e.~$\bx^*_i = \by^*_i + \b1_{H^*_i}$ where $\by^* \in [0,1]^{\cA \times \cG}$ is a feasible solution of (LogMultilinear) and $(H^*_1,\ldots,H^*_n)$ is a partition of $\cH$.
Let $\by' \in [0,1]^{\cA \times \cG}$ be an arbitrary fractional solution, satisfying
$$ \frac{1}{n} \sum_{i \in \cA'} \frac{V_i(\by^*_i)}{V_i(\by'_i)} \leq \beta $$
and $\by'_i = 0$ for $i \notin \cA'$.
Then there is a matching $\pi:\cA \rightarrow \cH$ such that 
$$ NSW(\by',\pi) = \left(\prod_{i \in \cA} V_i(\by'_i + \b1_{\pi(i)}) \right)^{1/n}
 \geq \frac{1}{\beta+1} \left( \prod_{i \in \cA} V_i(\bx^*_i) \right)^{1/n} = \frac{1}{\beta+1} \ OPT.$$
\end{lemma}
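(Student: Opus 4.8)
The plan is to exhibit the matching $\pi$ explicitly; no alternating-cycle argument is needed for this particular statement (that will only be required later, for the stronger version that also forces $\pi(i)$ to be valuable for agent $i$). For each $i \in \cA$ with $H^*_i \neq \emptyset$, let $h^*_i \in H^*_i$ attain $\max_{h \in H^*_i} v_i(h)$ and put $\pi(i) := h^*_i$. Since $(H^*_1,\ldots,H^*_n)$ is a partition of $\cH$, the items $h^*_i$ chosen this way are distinct, and the $n - |\{i : H^*_i \neq \emptyset\}| = |\{i : H^*_i = \emptyset\}|$ leftover items of $\cH$ can be assigned bijectively and arbitrarily to the agents with $H^*_i = \emptyset$; this gives a bijection $\pi : \cA \to \cH$. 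We may assume $OPT > 0$, i.e.\ $V_i(\bx^*_i) > 0$ for every $i$, since otherwise the claim is trivial.

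The heart of the argument is a per-agent inequality saying that the single item $\pi(i)$ can stand in for the entire block $H^*_i$ at multiplicative cost $|H^*_i|$. Fix $i$ with $H^*_i \neq \emptyset$. Using that $\by^*_i$ is supported on $\cG'$ while $H^*_i \subseteq \cH$, so the supports are disjoint, the diminishing-returns form of submodularity (via the probabilistic interpretation of the multilinear extension) gives $V_i(\bx^*_i) = V_i(\by^*_i + \b1_{H^*_i}) \le V_i(\by^*_i) + v_i(H^*_i)$; subadditivity gives $v_i(H^*_i) \le \sum_{h \in H^*_i} v_i(h) \le |H^*_i|\, v_i(h^*_i)$; and monotonicity gives $v_i(h^*_i) = V_i(\b1_{h^*_i}) \le V_i(\by'_i + \b1_{\pi(i)})$. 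Chaining these, dividing by $V_i(\by'_i + \b1_{\pi(i)}) > 0$ (strictly positive because $v_i(\pi(i)) = v_i(h^*_i) > 0$ — else $v_i(H^*_i) = 0$ and $OPT = 0$), and using $V_i(\by'_i + \b1_{\pi(i)}) \ge V_i(\by'_i)$, we obtain
$$ \frac{V_i(\bx^*_i)}{V_i(\by'_i + \b1_{\pi(i)})} \ \le\ \frac{V_i(\by^*_i)}{V_i(\by'_i)} + |H^*_i| . $$
If $H^*_i = \emptyset$ then $V_i(\bx^*_i) = V_i(\by^*_i)$ and the same inequality holds with $|H^*_i| = 0$; and such an $i$ necessarily lies in $\cA'$, since otherwise $V_i(\bx^*_i) = V_i(\by^*_i) \le v_i(\cG') = 0$. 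Finally, for $i \notin \cA'$ we have $V_i(\by^*_i) \le v_i(\cG') = 0$, so the first term above vanishes and the bound reads $V_i(\bx^*_i)/V_i(\by'_i + \b1_{\pi(i)}) \le |H^*_i|$.

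Multiplying the per-agent bounds over all $i \in \cA$, taking the $n$-th root, and applying AM--GM to the right-hand side yields
$$ \left( \prod_{i \in \cA} \frac{V_i(\bx^*_i)}{V_i(\by'_i + \b1_{\pi(i)})} \right)^{1/n} \ \le\ \frac1n \sum_{i \in \cA'} \frac{V_i(\by^*_i)}{V_i(\by'_i)} \ +\ \frac1n \sum_{i \in \cA} |H^*_i| \ \le\ \beta + 1 , $$
where the first sum is bounded by the hypothesis and the second equals $\frac1n |\cH| = 1$. Rearranging is precisely $NSW(\by', \pi) = (\prod_{i \in \cA} V_i(\by'_i + \b1_{\pi(i)}))^{1/n} \ge \frac{1}{\beta+1} (\prod_i V_i(\bx^*_i))^{1/n} = \frac{1}{\beta+1} OPT$. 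I do not anticipate a real obstacle: the conceptual step is the choice $\pi(i) = h^*_i$, which together with the identity $\sum_i |H^*_i| = n$ turns the per-block cost $|H^*_i|$ into exactly the additive ``$+1$'' after AM--GM; everything else is bookkeeping — checking the per-agent inequality uniformly across the three cases ($H^*_i \neq \emptyset$, $H^*_i = \emptyset$, $i \notin \cA'$) and that every denominator $V_i(\by'_i + \b1_{\pi(i)})$ is positive, both of which follow once we reduce to the case $OPT > 0$.
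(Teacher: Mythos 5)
Your proof is correct and matches the paper's argument in all essentials: same choice of $\pi$ (most valuable item in $H^*_i$, leftovers assigned arbitrarily), same per-agent bound $V_i(\bx^*_i)\le V_i(\by^*_i)+|H^*_i|\,v_i(\pi(i))$ followed by AM--GM and $\sum_i |H^*_i| = n$. The only cosmetic difference is that the paper applies AM--GM before bounding each ratio, whereas you bound the ratios first and then apply AM--GM; the two orderings are interchangeable here.
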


\begin{proof}
Suppose that agent $i$ receives $k_i = |H^*_i|$ items from $\cH$ in the optimal solution, and let $\pi(i) \in H^*_i$ be the most valuable item in $H^*_i$ (as a singleton).  We extend this to a matching $\pi:\cA \rightarrow \cH$, by allocating any remaining items arbitrarily to agents such that $H^*_i = \emptyset$. 
By the AMGM inequality, we can write
$$ \frac{OPT}{NSW(\by,\pi)} = \left( \prod_{i \in \cA} \frac{V_i(\bx^*_i)}{V_i(\by_i + \b1_{\pi(i)})} \right)^{1/n}
\leq \frac{1}{n} \sum_{i \in \cA} \frac{V_i(\bx^*_i)}{V_i(\by_i + \b1_{\pi(i)})} $$
By submodularity, we have $V_i(\bx^*_i) = V_i(\by^*_i + \b1_{H^*_i}) \leq V_i(\by^*_i) + k_i V_i(\b1_{\pi(i)})$.
Thus, we obtain
\begin{align*}
    \frac{OPT}{NSW(\by,\pi)}
&\leq \frac{1}{n} \sum_{i \in \cA} \frac{V_i(\by^*_i) + k_i V_i(\b1_{\pi(i)})}{V_i(\by_i + \b1_{\pi(i)})}  \\
 &\leq \frac{1}{n} \left( \sum_{i \in \cA'}\frac{V_i(\by^*_i)}{V_i(\by_i + \b1_{\pi(i)})} + \sum_{i \in \cA} k_i \right) \\
 &\leq \frac{1}{n} \left( \sum_{i \in \cA'}\frac{V_i(\by^*_i)}{V_i(\by_i)} + \sum_{i \in \cA} k_i \right)
\end{align*} 
using monotonicity of $V_i$ in the denominator. Note that $V_i(\by^*_i) = 0$ and $v_i(\pi(i)) > 0$ for every agent $i \notin \cA'$, because these agents do not derive any value from $\cG' = \cG \setminus \cH$ and hence $v_i(H^*_i) > 0$ for these agents; that's why we can switch to $\cA'$ in the first summation. Finally, using the assumption $\frac{1}{n} \sum_{i \in \cA'} \frac{V_i(\by^*_i)}{V_i(\by_i)} \leq \beta$
and the fact that $\frac{1}{n} \sum_{i \in \cA} k_i = \frac{1}{n} \sum_{i \in \cA} |H^*_i| = 1$, we obtain
$$ \frac{OPT}{NSW(\by,\pi)} \leq \beta + 1.$$
 \end{proof}

\begin{corollary}
\label{cor:rematching}
The fractional solution $\by^{(s)}$ = {\bf RestrictedRandomizedRounding}($\by$) satisfies with constant probability
$$ \max_\pi \NSW(\by^{(s)},\pi) \geq \frac{1}{7+12/c} OPT.$$
\end{corollary}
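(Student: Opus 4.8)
The plan is to chain together the three guarantees established in Phases~III--V, with a single free parameter $\epsilon>0$ that we pin down at the end. First I would observe that the fractional part $\by^*$ of the fixed optimum $\bx^*$ of (Mixed-Multilinear), with $\bx^*_i = \by^*_i + \b1_{H^*_i}$ as in Lemma~\ref{lem:rematching}, is a feasible solution of (LogMultilinear): the packing constraints $\sum_{i\in\cA'} y^*_{ij}\le 1$ follow from $\sum_{i\in\cA} x^*_{ij}\le 1$, and $y^*_{ij}=0$ on $\cH$ because $\bx^*$ is integral there (and without loss of generality $\by^*_i=0$ for $i\notin\cA'$, since such agents derive no value from $\cG'$). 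Hence Theorem~\ref{thm:optimization}, applied with this particular $\by^*$, guarantees that the solution $\by$ returned by the Iterated Continuous Greedy algorithm satisfies $\frac{1}{n}\sum_{i\in\cA'} V_i(\by^*_i)/V_i(\by_i)\le e$. (For the discretized implementation this holds with high probability, which is absorbed into the final ``constant probability''.)

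Next I would fix a constant $\epsilon\in(0,\,3/e-1]$, say $\epsilon=\tfrac{1}{10}$, and invoke Lemma~\ref{lem:small-items} with $\alpha=e$. This yields that with probability $\Omega(\epsilon)=\Omega(1)$ the sparsified solution $\by^{(s)}=\textbf{RestrictedRandomizedRounding}(\by)$ satisfies
$$ \frac{1}{n}\sum_{i\in\cA'}\frac{V_i(\by^*_i)}{V_i(\by^{(s)}_i)} \le (1+\epsilon)\left(2+\frac{4}{c}\right)e =: \beta. $$
Since $\by^{(s)}_i=0$ for every $i\notin\cA'$ (those agents receive neither small nor large items), I can apply Lemma~\ref{lem:rematching} with $\by'=\by^{(s)}$ and this value of $\beta$, obtaining a matching $\pi$ with $\NSW(\by^{(s)},\pi)\ge\frac{1}{\beta+1}\OPT$, so in particular $\max_\pi\NSW(\by^{(s)},\pi)\ge\frac{1}{\beta+1}\OPT$.

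Finally I would check the arithmetic $\beta+1\le 7+\tfrac{12}{c}$. Writing $\beta+1=\big((1+\epsilon)\cdot 2e+1\big)+(1+\epsilon)\cdot\tfrac{4e}{c}$ and matching the constant and $1/c$ terms separately, this reduces to $(1+\epsilon)\cdot 2e\le 6$ and $(1+\epsilon)\cdot 4e\le 12$, both of which are equivalent to $1+\epsilon\le 3/e\approx 1.1036$, and hence hold by the choice of $\epsilon$. Combining the three steps gives $\max_\pi\NSW(\by^{(s)},\pi)\ge\frac{1}{7+12/c}\OPT$ with probability $\Omega(1)$, as claimed. I do not expect any real obstacle here beyond bookkeeping; the one point that needs attention is the consistency of the object $\by^*$ across Theorem~\ref{thm:optimization} and Lemma~\ref{lem:rematching} — both must refer to the fractional part of the same fixed optimum of (Mixed-Multilinear) — together with checking that $\by^{(s)}$ vanishes off $\cA'$ so that Lemma~\ref{lem:rematching} is applicable.
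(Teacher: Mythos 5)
Your proof is correct and follows the same chain as the paper's: Theorem~\ref{thm:optimization} gives $\alpha=e$, Lemma~\ref{lem:small-items} upgrades this to a bound for $\by^{(s)}$ with constant probability, and Lemma~\ref{lem:rematching} converts that into the matching guarantee. The only cosmetic differences are that the paper picks $\epsilon = 3/e-1$ exactly (making $\beta+1$ equal $7+12/c$ rather than at most it), and that you spell out the bookkeeping about feasibility of $\by^*$ and the vanishing of $\by^{(s)}$ off $\cA'$, which the paper leaves implicit.
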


\begin{proof}
Since $\by$ from the Iterated Continuous Greedy algorithm satisfies $\frac{1}{n} \sum_{i \in \cA'} \frac{V_i(\by^*_i)}{V_i(\by_i)} \leq e$, we apply Lemma~\ref{lem:small-items} with $\alpha = e$.
For $\epsilon = 3/e - 1$, we get $ \frac{1}{n} \sum_{i \in \cA'} \frac{V_i(\by^*_i)}{V_i(\by^{(s)}_i)} \leq 3 (2 + \frac{4}{c})$ with constant probability. Then, we apply Lemma~\ref{lem:rematching} with $\by' = \by^{(s)}$ and $\beta = 3 (2 + \frac{4}{c})$.
We conclude that there is a matching $\pi$ such that $\NSW(\by,\pi) \geq \frac{1}{7 + 12/c} OPT$.
\end{proof}

\paragraph{Matching recombination.}
Now that we know a good matching exists, we want to show that there exists another matching $\rho$ with some additional desirable properties. The matching $\rho$ should be such that each agent $a$ either gets significant value from the matching item $\rho(a)$ alone, or there is no item of very large value contributing to agent $a$ in the fractional solution. The solution is a procedure we borrow almost verbatim from \cite{garg2020approximating}:
a careful combination of the initial matching $\tau$ and a matching $\pi$ optimal with respect to our fractional solution $\by$. Our goal is to prove the following lemma, analogous to Lemma 6.1 in \cite{garg2020approximating}. Since our setup here is somewhat different, we repeat the whole argument in a self-contained manner. Also, we remark that while this is an actual algorithmic step in \cite{garg2020approximating}, we only need this procedure in the analysis.

\begin{lemma}
\label{lem:matching}
Let $d \geq 2$.
Let $\tau:\cA \rightarrow \cG$ be the matching maximizing $\prod_{a \in \cA} v_i(\tau(a))$, $\cH = \tau(\cA)$ the items allocated in this matching. Let $\by \in [0,1]^{\cA' \times \cG'}$ and let $\pi: \cA \rightarrow \cH$ be any matching. Then there is a matching $\rho: \cA \rightarrow \cH$ such that
$$ \NSW(\by,\rho) \geq \frac{1}{d+2} \NSW(\by, \pi) $$
and for every agent $a \in \cA$,
\begin{enumerate}
    \item [(i)] either $v_a(\rho(a)) \geq \frac{1}{d} V_a(\by_a)$ 
    (in which case the $\rho$-matching item itself recovers a constant fraction of agent $a$'s value)
    
    \item [(ii)] or for every item $j \in \cG'$, $v_a(j) < \frac{1}{d} V_a(\by_a)$  
    (in which case there are no items with large contributions to $V_a(\by_a)$).
\end{enumerate}
\end{lemma}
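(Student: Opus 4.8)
The plan is to follow the alternating-cycle argument of \cite{garg2020approximating}, playing the given matching $\pi$ off against the optimal initial matching $\tau$. Since $\cH = \tau(\cA)$ and matchings are injective, both $\tau$ and $\pi$ are bijections $\cA \to \cH$, so $\pi \cup \tau$ decomposes $\cA \cup \cH$ into alternating cycles $a_1, h_1, a_2, h_2, \ldots, a_k, h_k$ with $\pi(a_i) = h_i$ and $\tau(a_i) = h_{i-1}$ (indices mod $k$). I would build $\rho$ cycle by cycle, using on each cycle either all of its $\pi$-edges or all of its $\tau$-edges. Call an agent $a$ \emph{deficient} if property (ii) fails for it (there is $j \in \cG'$ with $v_a(j) \ge \frac1d V_a(\by_a)$) and moreover $v_a(\pi(a)) < \frac1d V_a(\by_a)$. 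On a cycle containing no deficient agent I keep the $\pi$-edges; on a cycle containing a deficient agent I switch to the $\tau$-edges.

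The structural conditions then come essentially for free. If a cycle kept its $\pi$-edges, every agent $a$ on it is non-deficient, so either (ii) holds or $v_a(\rho(a)) = v_a(\pi(a)) \ge \frac1d V_a(\by_a)$, which is (i). If a cycle was switched to $\tau$-edges, then for an agent $a$ on it: if (ii) holds we are done; otherwise there is $j_a \in \cG'$ with $v_a(j_a) \ge \frac1d V_a(\by_a)$, and by the optimality of $\tau$, comparing it with the matching obtained by reassigning $a$ to $j_a$ and vacating $\tau(a)$, we get $v_a(\rho(a)) = v_a(\tau(a)) \ge v_a(j_a) \ge \frac1d V_a(\by_a)$, which is (i).

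It remains to bound the loss, i.e.~to show $\prod_{a: \rho(a) \ne \pi(a)} \frac{V_a(\by_a + \b1_{\pi(a)})}{V_a(\by_a + \b1_{\tau(a)})} \le (d+2)^n$; since this product factors over the switched cycles and $\sum_{C}|C| \le n$, it suffices to prove $\prod_{i=1}^k \frac{V_{a_i}(\by_{a_i} + \b1_{h_i})}{V_{a_i}(\by_{a_i} + \b1_{h_{i-1}})} \le (d+2)^k$ for each switched cycle. I would use the elementary estimates $\max(V_a(\by_a), v_a(j)) \le V_a(\by_a + \b1_j) \le V_a(\by_a) + v_a(j)$ (from multilinearity and submodularity), so each factor is at most $\frac{V_{a_i}(\by_{a_i}) + v_{a_i}(h_i)}{\max(V_{a_i}(\by_{a_i}), v_{a_i}(h_{i-1}))}$. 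For a deficient agent this is $< 1 + \frac1d$ since $v_{a_i}(\pi(a_i)) < \frac1d V_{a_i}(\by_{a_i})$ and the denominator is at least $V_{a_i}(\by_{a_i})$. For the non-deficient agents I would invoke the optimality of $\tau$ in the form of swap inequalities along the cycle (replacing a sub-arc of $\tau$ by the corresponding sub-arc of $\pi$, together with reassignments of the deficient agents to their good items $j$), which balances any factor that is large against a corresponding factor that is small; handling separately the two cases $v_a(\pi(a)) \le V_a(\by_a)$ and $v_a(\pi(a)) > V_a(\by_a)$ — using $\max(V_a(\by_a), v_a(\tau(a))) \ge V_a(\by_a)$ in the first and $\ge v_a(\tau(a))$ in the second — keeps every estimate tight enough. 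In the simplest case $k=2$ this already gives a bound of $4d \le (d+2)^2$.

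The main obstacle is exactly this last step: a non-deficient agent on a switched cycle may have $v_a(\pi(a))$ arbitrarily large compared with both $V_a(\by_a)$ and $v_a(\tau(a))$, so the factor at that agent can be huge in isolation. The point will be that the $\tau$-optimality inequalities only control \emph{products} of item values along the cycle, and they force any such agent to be paired with another whose $\tau$-item value is correspondingly large, at which agent $V_a(\by_a)$ (or $v_a(\tau(a))$) dominates and the factor is close to $1$. Making this cancellation precise along the whole cycle is the technical heart, and it is carried out exactly as in Lemma~6.1 of \cite{garg2020approximating}.
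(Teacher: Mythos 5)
There is a genuine gap, and it lies exactly where you suspected: the decision rule ``switch the whole cycle to $\tau$ whenever it contains a deficient agent'' is not strong enough to bound the loss, and the cancellation you hope for from the $\tau$-optimality swap inequalities does not materialize. The $\tau$-optimality only constrains products of \emph{item values} $v_a(\cdot)$ along the cycle, but the loss is measured by the multilinear values $V_a(\by_a + \b1_\cdot)$, and the two quantities decouple whenever $V_a(\by_a)$ dominates. Concretely, take $d=2$ and a $3$-cycle $a_1,a_2,a_3$ with $\pi(a_i)=h_i$, $\tau(a_i)=h_{i-1}$, where $V_{a_i}(\by_{a_i})=1$ for all $i$ and
$v_{a_1}(h_1)=0.01$, $v_{a_1}(h_3)=\tfrac12$, there is $j^*\in\cG'$ with $v_{a_1}(j^*)=\tfrac12$ (so $a_1$ is deficient);
$v_{a_2}(h_2)=M$, $v_{a_2}(h_1)=1/M$, $v_{a_2}(j)=0$ on $\cG'$;
$v_{a_3}(h_3)=1/M^2$, $v_{a_3}(h_2)=1$, $v_{a_3}(j)=0$ on $\cG'$.
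All of $\tau$'s swap inequalities hold (the full cycle gives $\tfrac{1}{2M}\ge\tfrac{0.01}{M}$ and the arc swap through $j^*$ is an equality $\tfrac{1}{2M}\ge\tfrac{1}{2M}$). Yet after switching the cycle to $\tau$, the multiplicative loss is at least
$\frac{V_{a_1}(\by_{a_1}+\b1_{h_1})}{V_{a_1}(\by_{a_1}+\b1_{h_3})}\cdot\frac{V_{a_2}(\by_{a_2}+\b1_{h_2})}{V_{a_2}(\by_{a_2}+\b1_{h_1})}\cdot\frac{V_{a_3}(\by_{a_3}+\b1_{h_3})}{V_{a_3}(\by_{a_3}+\b1_{h_2})} \ge \frac23\cdot\frac{M}{2}\cdot\frac12 = \frac{M}{6},$
which is unbounded as $M\to\infty$, whereas your target is $(d+2)^3 = 64$. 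The point is that at $a_3$ the item-value ratio $v_{a_3}(\tau(a_3))/v_{a_3}(\pi(a_3)) = M^2$ is huge and exactly compensates $a_2$'s item-value ratio, but the $V$-factor at $a_3$ is only about $\tfrac12$ because both $V_{a_3}(\by_{a_3}+\b1_{h_2})$ and $V_{a_3}(\by_{a_3}+\b1_{h_3})$ are swamped by $V_{a_3}(\by_{a_3})=1$. So the compensation happens in the wrong quantity.

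The paper's proof does not try to bound the switch cost unconditionally. It first passes from $\pi$ to an intermediate $\pi'$ that drops the $\pi$-items of agents in $\cB=\{a: v_a(\pi(a))<\tfrac{1}{d-1}V_a(\by_a)\}$ (a cheap step, costing a factor $\tfrac{d}{d-1}$), which breaks the alternating cycles into paths headed by a $\cB$-agent. Then it computes the actual swap cost $\varphi$ of each path (measured in the multilinear values, i.e.\ exactly the product you are trying to bound) and switches to $\tau$ \emph{only if} $\varphi\le d^k$ (``$\tau$-favorable''). On a $\pi$-favorable path it keeps $\pi'$, so the leading $\cB$-agent $a_1$ receives nothing, i.e.\ $\rho(a_1)=\emptyset$ — a possibility your construction never allows, which is a telltale sign the approaches have diverged. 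The delicate part of the paper is then to show that even though $a_1$ gets no item, it still satisfies (ii); this is done by combining the $\tau$-optimality swap inequality for the arc $(a_1\to j,\ a_2,\ldots,a_k\to\pi)$ with the inequality $\varphi>d^k$. Your proof proposal has no analogue of this step, and without the $\varphi$-threshold it has no way to avoid paying an unbounded switch cost. In my example above the paper's rule would classify the relevant path as $\pi$-favorable, set $\rho(a_3)=\emptyset$ (note $a_3\in\cB$ as well), keep $\rho(a_2)=\pi(a_2)$, and set $\rho(a_1)=\tau(a_1)$, incurring loss about $1$ instead of $M/6$.
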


\begin{proof}
Let $\tau$ be the initial optimal matching and $\cH = \tau(\cA)$.
Let $\by \in [0,1]^{\cA' \times \cG'}$ and let $\pi: \cA \rightarrow \cH$ be any matching. (We will use the optimal matching with respect to $\by$ but that is not relevant now.)

We will construct a new matching $\rho$ which combines $\tau$ and $\pi$ in a certain way. 
First, whenever $\tau(a) = \pi(a)$, we set $\rho(a) = \tau(a) = \pi(a)$. 
Next, we consider the two matchings as sets of edges $(a,\pi(a))$ and $(a,\tau(a))$ and consider their symmetric difference, $\pi \Delta \tau$. The symmetric difference consists of alternating paths and cycles covering the agents such that $\pi(a) \neq \tau(a)$.

Let $\cB = \{ a \in \cA: v_a(\pi(a)) < \frac{1}{d-1} V_a(\by_a) \}$. We define a modified matching $\pi'$ where $\pi'(a) = \pi(a)$ for $a \notin \cB$ and $\pi'(a) = \emptyset$ for $a \in \cB$, meaning that agents $a \in \cB$ don't get any items in $\pi'$. If $\pi(a)$ contributes less than $\frac{1}{d-1} V_a(\by_a)$, we have $V_a(\by_a + \b1_{\pi(a)}) \leq V_a(\by_a) + v_a(\pi(a)) \leq \frac{d}{d-1} V_a(\by_a) \leq \frac{d}{d-1} V_a(\by_a + \b1_{\pi'(a)})$, and so
\begin{equation}
    \label{eq:pi'}
\NSW(\by, \pi') = \left(\prod_{a \in \cA} V_a(\by_a + \b1_{\pi'(a)})\right)^{1/n} \geq \left(\prod_{a \in \cA} \frac{d-1}{d} V_a(\by_a + \b1_{\pi(a)})\right)^{1/n} = \frac{d-1}{d}\NSW(\by, \pi).
\end{equation}

Consider an alternating path/cycle $C$ in $\pi \Delta \tau$ and its set of agents $\cA(C)$. 
We distinguish two cases.

\begin{enumerate}
\item $\cB \cap \cA(C) = \emptyset$ ($\pi$ provides good value for all agents in $\cA(C)$). In this case we set $\rho(a) = \pi(a)$ for all $a \in \cA(C)$.

\item $\cB \cap \cA(C) \neq \emptyset$ (some agents in $\cA(C)$ don't get good value from $\pi$). We remove from $C$ every edge $(a,\pi(a))$ such that $a \in \cB$ (which means that $\pi'(a) = \emptyset$); this breaks $C$ into alternating paths. Let us consider one such alternating path, denoting the agents on it $a_1,a_2,\ldots,a_k$ and the items $i_1,i_2,\ldots,i_k$. If $k=1$, the path consists of just one edge $(a_1,i_1)$. If $k>1$, the path consists of edges $(a_1,i_1),  (i_1,a_2), (a_2, i_2), \ldots, (a_k,i_k)$, where $i_j = \tau(a_j)$ for $j \leq k$ and $i_j = \pi(a_{j+1})$ for $j<k$.  We also have $a_1 \in \cB$ (this is an agent who does not get any item in $\pi'$) and $a_2,\ldots,a_k \notin \cB$. 

We use the following criterion to decide whether we should use the $\pi$-edges or the $\tau$-edges from this alternating path: Let
\begin{align*}
    \varphi(a_1,\ldots,a_k) 
&= \frac{V_{a_1}(\by_{a_1})}{V_{a_1}(\by_{a_1} + \b1_{\tau(a_1)})} \prod_{j=2}^{k} \frac{V_{a_j}(\by_{a_j} + \b1_{\pi(a_j)})}{V_{a_j}(\by_{a_j} + \b1_{\tau(a_j)})} \\
&= \frac{V_{a_1}(\by_{a_1})}{V_{a_1}(\by_{a_1} + \b1_{i_1})} \prod_{j=2}^{k} \frac{V_{a_j}(\by_{a_j} + \b1_{i_{j-1}})}{V_{a_j}(\by_{a_j} + \b1_{i_j})}.
\end{align*}  
This is the factor incurred in the objective function if we switch from $\pi'$ to $\tau$ on this alternating path. We call this alternating path {\em $\tau$-favorable}\footnote{``reversible'' in \cite{garg2020approximating}}, if $\varphi(a_1,\ldots,a_k) \leq d^k$, and we define $\rho(a_j) = \tau(a_j) = i_j$ for $1 \leq j \leq k$. Otherwise, we call it $\pi$-favorable and we define $\rho(a_1) = \emptyset$, $\rho(a_j) = \pi(a_j) = i_{j-1}$ for $2 \leq j \leq k$.

If we view the process as starting from the matching $\pi'$ and then applying a swap for each $\tau$-favorable path, we obtain a solution $(\by, \rho)$ of value
\begin{align*}
    \NSW(\by, \rho) 
    &= \left(\prod_{(a_1,\ldots,a_k) \in \cP_\tau}  \varphi(a_1,\ldots,a_k)\right)^{-\frac1n}\NSW(\by, \pi') \\
 &\geq \left(\prod_{(a_1,\ldots,a_k) \in \cP_\tau} d^k\right)^{-\frac1n}\NSW(\by, \pi')
\end{align*}  
where $\cP_\tau$ is the set of $\tau$-favorable alternating paths. Since the alternating paths are disjoint in terms of the agents they cover, $\prod_{(a_1,\ldots,a_k) \in \cP_\tau} d^{k} \leq d^n$, and together with (\ref{eq:pi'}) we obtain that 
$$ \NSW(\by, \rho) \geq \frac{1}{d} \NSW(\by, \pi') \geq \frac{d-1}{d^{2}} \NSW(\by, \pi)
 \geq \frac{1}{d+2} \NSW(\by,\pi). $$
\end{enumerate}

Now we turn to the guarantee for each agent $a \in \cA$.
If $\rho(a) = \tau(a)$ (i.e. the agent receives an item from the initial matching), then we have either $v_a(\rho(a)) \geq \frac{1}{d} V_a(\by_a)$ which satisfies (i), or by the optimality of the initial matching, we have for every $j \in \cG \setminus \cH$, $v_a(j) \leq v_a(\tau(a)) = v_a(\rho(a)) < \frac{1}{d} V_a(\by_a)$ which satisfies (ii).

If $\rho(a) = \pi(a)$, then this means that $a$ was on a $\pi$-favorable alternating path, and also $a \notin \cB$ because otherwise we would have set $\rho(a) = \pi'(a) = \emptyset$. So this means that $v_a(\rho(a)) = v_a(\pi(a)) \geq \frac{1}{d-1}{V_a(\by_a)}$ by the definition of $\cB$. So we satisfy (i).

The last case is that $\rho(a) = \emptyset$. This means that $a = a_1 \in \cB$ is the starting point of a $\pi$-favorable path $P$, and $\rho(a) = \pi'(a) = \emptyset$. Consider any item $j \in \cG \setminus \cH$. In the initial matching $\tau$, we could replace the $\tau$-edges on $P$ by the $\pi$-edges, and in addition assign $j$ to agent $a_1$. However, this would not result in an improvement since $\tau$ was optimal (as a stand-alone matching). Therefore, we have the following inequality:
$$ \frac{v_{a_1}(j)}{v_{a_1}(\tau(a_1))} \cdot \prod_{j=2}^{k} \frac{v_{a_j}(\pi(a_j))}{v_{a_j}(\tau(a_j))} \leq 1.$$
Recall that $a_2,\ldots,a_k \notin \cB$ and therefore $v_{a_j}(\pi(a_j))\geq \frac{1}{d-1} V_{a_j}(\by_{a_j})$ for $j=2,\ldots,k$.
This implies that $\frac{v_{a_j}(\pi(a_j))}{v_{a_j}(\tau(a_j))} 
\geq \frac{1}{d} \frac{V_{a_j}(\by_{a_j} + \b1_{\pi(a_j)})}{v_{a_j}(\tau(a_j))}
\geq \frac{1}{d} \frac{V_{a_j}(\by_{a_j} + \b1_{\pi(a_j)})}{V_{a_j}(\by_{a_j} + \b1_{\tau(a_j)})}$, 
and also obviously $\frac{v_{a_1}(j)}{v_{a_1}(\tau(a_1))} \geq \frac{v_{a_1}(j)}{V_{a_1}(\by_{a_1} + \b1_{\tau(a_1)})}$.
Therefore, we have
$$ \frac{v_{a_1}(j)}{V_{a_1}(\by_{a_1} + \b1_{\tau(a_1)})} \cdot \prod_{j=2}^{k} \frac{V_{a_j}(\by_{a_j} + \b1_{\pi(a_j)}) }{V_{a_j}(\by_{a_j} + \b1_{\tau(a_j)})} \leq d^{k-1}.$$
Finally, since the path is $\pi$-favorable, we have
$$\varphi(a_1,i_1,\ldots,a_k,i_k) = \frac{V_{a_1}(\by_{a_1})}{V_{a_1}(\by_{a_1} + \b1_{\tau(a_1)})} 
 \prod_{j=2}^{k} \frac{V_{a_j}(\by_{a_j} + \b1_{\pi(a_j)})}{V_{a_j}(\by_{a_j} + \b1_{\tau(a_j)})} > d^k.$$
Combining the last two inequalities, we obtain
$$ v_{a_1}(j) < \frac{1}{d} V_{a_1}(\by_{a_1}) $$
which means that agent $a=a_1$ satisfies (ii).
\end{proof}

\subsection{Conclusion of the analysis}

We conclude the analysis by showing that the matching $\rho$ we proved to exist in Section~\ref{sec:rematching} provides a good value with our fractional solution, even if we ignore the contribution of large items. Hence we can obtain an integral assignment which provides a constant-factor approximation relative to $OPT$ and thus prove Theorem~\ref{thm:main}. 

\begin{lemma}
\label{lem:large-items}
Let $(S_1,\ldots,S_n)$ be the assignment obtained by ${\bf RestrictedRandomizedRounding}(\by)$ with parameter $c>0$ and $\by_i^{(s)} = \by^{(L_i)}_i + \b1_{S_i}$ the sparsified fractional solution. Then there exists a matching $\rho:\cA \rightarrow \cH$ such that
$$ \left( \prod_{i \in \cA} v_i(S_i + \rho(i)) \right)^{1/n} \ge \frac{OPT}{(7+12/c)(c+3)(c+4)}. $$
\end{lemma}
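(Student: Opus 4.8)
The plan is to stitch together the two ingredients already developed in this section: Corollary~\ref{cor:rematching}, which gives a matching that works well with the sparsified solution $\by^{(s)}$, and Lemma~\ref{lem:matching}, which upgrades that matching to one, $\rho$, with the dichotomy (i)/(ii) for every agent; then the remaining work is to use (i)/(ii) to show that discarding the fractionally-allocated large items (i.e.~keeping only $S_a$) loses only a constant factor. Concretely, I would first condition on the constant-probability event of Corollary~\ref{cor:rematching}, so there is a matching $\pi:\cA\to\cH$ with $\NSW(\by^{(s)},\pi)=\big(\prod_{a\in\cA}V_a(\by^{(s)}_a+\b1_{\pi(a)})\big)^{1/n}\ge\frac{1}{7+12/c}\,OPT$. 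I then set $d=c+2$ (note $d\ge 2$ since $c>0$) and apply Lemma~\ref{lem:matching} to $\by^{(s)}$ and this $\pi$, obtaining a matching $\rho:\cA\to\cH$ with $\NSW(\by^{(s)},\rho)\ge\frac1{d+2}\NSW(\by^{(s)},\pi)=\frac1{(c+4)(7+12/c)}\,OPT$, and such that each agent $a$ satisfies either (i) $v_a(\rho(a))\ge\frac1d V_a(\by^{(s)}_a)$, or (ii) $v_a(j)<\frac1d V_a(\by^{(s)}_a)$ for all $j\in\cG'$.

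The heart of the argument is then a per-agent comparison of $v_a(S_a+\rho(a))$ with $V_a(\by^{(s)}_a+\b1_{\rho(a)})$. Since $S_a$, $\{\rho(a)\}$ and $L_a$ are pairwise disjoint, $\by^{(s)}_a+\b1_{\rho(a)}=\b1_{S_a+\rho(a)}+\by^{(L_a)}_a$, and using concavity of $V_a$ along the nonnegative direction $\by^{(L_a)}_a$, followed by multilinearity and submodularity of $v_a$, one bounds the loss from dropping the fractional large items:
$$V_a(\by^{(s)}_a+\b1_{\rho(a)})-v_a(S_a+\rho(a))\;\le\;\sum_{j\in L_a}y_{aj}\,\big(V_a(\b1_{S_a+\rho(a)}+\b1_j)-V_a(\b1_{S_a+\rho(a)})\big)\;\le\;\sum_{j\in L_a}y_{aj}\,v_a(j).$$
In case (ii), each $v_a(j)<\frac1d V_a(\by^{(s)}_a)\le\frac1d V_a(\by^{(s)}_a+\b1_{\rho(a)})$ and $\sum_{j\in L_a}y_{aj}<c+1$ by the first corollary of {\bf FindLargeSet}; hence the right-hand side is $<\frac{c+1}{d}V_a(\by^{(s)}_a+\b1_{\rho(a)})$, which rearranges to $v_a(S_a+\rho(a))\ge\frac{d-c-1}{d}V_a(\by^{(s)}_a+\b1_{\rho(a)})=\frac1{c+2}V_a(\by^{(s)}_a+\b1_{\rho(a)})$.

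In case (i), submodularity of $v_a$ (the marginal of a single item over a product distribution is at most its singleton value) gives $V_a(\by^{(s)}_a+\b1_{\rho(a)})\le V_a(\by^{(s)}_a)+v_a(\rho(a))\le (d+1)v_a(\rho(a))\le (c+3)\,v_a(S_a+\rho(a))$, using monotonicity for $v_a(\rho(a))\le v_a(S_a+\rho(a))$. So in either case $v_a(S_a+\rho(a))\ge\frac1{c+3}V_a(\by^{(s)}_a+\b1_{\rho(a)})$. Taking the geometric mean over $a\in\cA$ and plugging in the bound on $\NSW(\by^{(s)},\rho)$,
$$\Big(\prod_{a\in\cA}v_a(S_a+\rho(a))\Big)^{1/n}\;\ge\;\frac1{c+3}\,\NSW(\by^{(s)},\rho)\;\ge\;\frac{OPT}{(7+12/c)(c+3)(c+4)},$$
which is the claim. (If $OPT=0$ the statement is vacuous; otherwise all terms above are strictly positive, so the geometric-mean manipulation is valid.)

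The step needing genuine care is the large-item bound in the displayed inequality: one must take the concavity/multilinearity expansion from the correct basepoint $\b1_{S_a+\rho(a)}$ (not $\emptyset$), verify that $j\in L_a$ is disjoint from $S_a\cup\{\rho(a)\}$ so that the partial derivative equals a set marginal, and — most delicately — translate condition (ii), which is phrased for \emph{singleton} values $v_a(j)$, into a bound on the \emph{marginal} $V_a(\b1_{S_a+\rho(a)}+\b1_j)-V_a(\b1_{S_a+\rho(a)})$; this is exactly where submodularity ($v_a(T+j)-v_a(T)\le v_a(j)$) is used. Everything else is bookkeeping, and the choice $d=c+2$ is forced by requiring the two per-agent losses ($d+1$ in case (i) and $d/(d-c-1)$ in case (ii)) to be at most $c+3$ while keeping $d+2=c+4$, so the three factors multiply to exactly $(7+12/c)(c+3)(c+4)$.
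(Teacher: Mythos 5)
Your proof is correct and follows essentially the same route as the paper: pick up the matching $\pi$ from Corollary~\ref{cor:rematching}, upgrade it via Lemma~\ref{lem:matching} with $d=c+2$, then do the per-agent case analysis on (i)/(ii) to show $v_a(S_a+\rho(a))\ge\frac{1}{c+3}V_a(\by^{(s)}_a+\b1_{\rho(a)})$, and finish by taking geometric means. Your only departure is cosmetic (you spell out the concavity-from-$\b1_{S_a+\rho(a)}$ step that yields $V_a(\by^{(s)}_a+\b1_{\rho(a)})-v_a(S_a+\rho(a))\le\sum_{j\in L_a}y_{aj}v_a(j)$, which the paper uses without comment, and you are slightly more explicit about conditioning on the constant-probability event of the corollary), so this is the same argument with tidier bookkeeping.
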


\begin{proof}
Given the sparsified solution $\by^{(s)}$ and the matching $\pi$ provided by Corollary~\ref{cor:rematching}, satisfying
$$ \NSW(\by^{(s)},\pi) \geq \frac{1}{7+12/c} OPT, $$
let $\rho$ be the matching provided by Lemma~\ref{lem:matching} with parameter $d = c+2$. This matching satisfies
$$ \NSW(\by^{(s)},\rho) \geq \frac{1}{c+4} \NSW(\by^{(s)}, \pi) \geq \frac{1}{(7+12/c)(c+4)} OPT $$
and for every agent $i \in \cA$, either (i) $v_i(\rho(i)) \geq \frac{1}{c+2} V_i(\by_i^{(s)})$  or (ii) for every item $j \in \cG' = \cG\backslash\cH$, $v_i(j) < \frac{1}{c+2} V_i(\by_i^{(s)})$. 

For every agent $i\in\cA$, if (i) is the case, then we know that 
$$v_i(S_i + \rho(i))\ge v_i(\rho(i)) \ge \frac{1}{c+3}\left(V_i(\by^{(s)}_i)+v_i(\rho(i))\right)\ge\frac{V_i(\by^{(s)}_i+\vect{1}_{\rho(i)})}{c+3}.$$ 
Otherwise in case (ii), we have 
\begin{align*}
    v_i(S_i + \rho(i))&\ge V_i(\by^{(s)}_i+\vect{1}_{\rho(i)})-\sum_{j\in L_i} y_{ij} v_i(j) \\
    &\ge V_i(\by^{(s)}_i+\vect{1}_{\rho(i)}) - (c+1) \frac{V_i(\by_i^{(s)})}{c+2}  \\ &\ge\frac{V_i(\by^{(s)}_i+\vect{1}_{\rho(i)})}{c+2},
\end{align*}
where the second inequality holds because $\sum_{j\in L_i} y_{ij} \leq c+1$.
In conclusion, we know that for any agent $i\in\cA$, 
$$v_i(S_i + \rho(i)) \ge \frac{1}{c+3} V_i(\by^{(s)}_i+\vect{1}_{\rho(i)}).$$ 
Therefore, we obtain
$$ \left( \prod_{i \in \cA} v_i(S_i + \rho(i)) \right)^{1/n} \ge 
\frac{1}{c+3} \NSW(\by^{(s)}, \rho) \geq \frac{OPT}{(7+12/c)(c+3)(c+4)}.$$
\end{proof}

Now we can prove the main theorem.

\begin{proof}[Proof of Theorem~\ref{thm:main}.]

By Lemma~\ref{lem:large-items}, there is a matching $\rho:\cA \rightarrow \cH$ such that even if we count only the contribution of the small items $S_i$ allocated in ${\bf RestrictedRandomizedRounding}(\by)$,  we have
$$ \left( \prod_{i \in \cA} v_i(S_i + \rho(i)) \right)^{1/n} \ge \frac{OPT}{(7+12/c)(c+3)(c+4)}.$$
This means that the same holds for the sets $R_i$ allocated in our algorithm by ${\bf RandomizedRounding}(\by)$, since $S_i \subseteq R_i$ (the sets $R_i$ include additionally the large items after rounding). In the final step, we find a matching $\sigma$ which is at least as good as $\rho$. 
We choose $c = 1$ which gives
$$ \left( \prod_{i \in \cA} v_i(R_i + \sigma(i)) \right)^{1/n} \ge \left( \prod_{i \in \cA} v_i(S_i + \rho(i)) \right)^{1/n}
\ge \frac{OPT}{19 \cdot 4 \cdot 5} = \frac{OPT}{380}.$$
\end{proof}

\section{Conclusion} 

We have shown a constant-factor approximation algorithm for Nash Social Welfare with submodular valuations, which is the largest natural class of valuations that allows a constant-factor approximation (using value queries) even for additive welfare maximization. However, there are still several directions and open problems to explore. An obvious one is to improve the approximation ratio which is rather large. As we mentioned, we believe that a substantially smaller (say double-digit) factor is hard to achieve with our approach.  

Another open problem is the asymmetric Nash Social Welfare problem, where the objective function is a weighted geometric mean of the agents' valuation functions: $\prod_{i=1}^{n} (v_i(S_i))^{\omega_i}$ for some $\omega_i \geq 0$ (the problem we consider is $\omega_i = 1/n$). The goal is to get a constant-factor approximation independent of the weights $\omega_i$. We remark that \cite{garg2020approximating} gives an approximation guarantee dependent on the weights $\omega_i$; we do not pursue this direction here. 
For the asymmetric problem, getting a universal constant factor is open even in the  the basic case of additive valuations.

Last but not least, solutions optimizing Nash Social Welfare often have additional fairness properties like the envy-free property, or envy-freeness up to one good (see \cite{amanatidis2017approximation}). A line of work has been developed in trying to achieve approximation guarantees for Nash Social Welfare and certain fairness guarantees at the same time \cite{barman2018finding, chaudhury2021little, chaudhury2021fair}. However, our solution does not seem to have such properties and constant-factor approximations with additional fairness guarantees are still unknown for valuation classes beyond additive ones. 
\bibliographystyle{plain}
\bibliography{bibfile}


\label{LastPage}
\end{document}